\theoremstyle{definition}
\newtheorem{theorem}{Theorem}
\newtheorem{lemma}{Lemma}
\newtheorem{remark}{Remark}
\newcommand{\R}{\mathbb{R}}
\newcommand{\C}{\mathbb{C}}
\newcommand{\N}{\mathbb{N}}
\newcommand{\id}{\mathds{1}}
\newcommand{\eps}{\varepsilon}
\newcommand{\Z}{\mathbb{Z}}
\newcommand{\e}{\mathrm{e}}
\newcommand{\sgn}{\mathop{\mathrm{sgn}}}
\newcommand{\abs}[1]{| #1 |}
\newcommand{\babs}[1]{\left| #1 \right|}
\newcommand{\crangle}{\rangle_{H}}
\newcommand{\srangle}{\rangle_{S}}
\newcommand{\norm}[1]{\left\| #1 \right\| }
\newcommand{\opnorm}[1]{\left\| #1 \right\|_{\mathrm{op}} }
\newcommand{\cnorm}[1]{\left\| #1 \right\|_{H}}
\DeclareMathOperator*{\Exp}{\mathbb{E}}
\DeclareMathOperator*{\tr}{\mathrm{tr}}
\newcommand{\dz}{\mathrm{d}z}
\newcommand{\ddz}{\frac{\mathrm{d}}{\dz}}
\newcommand{\dddz}{\frac{\mathrm{d}^2}{\dz^2}}
\newcommand{\dx}{\mathrm{d}x}
\newcommand{\dr}{\mathrm{d}r}
\newcommand{\hatf}{\widehat{f}}
\newcommand{\hatg}{\widehat{g}}
\newcommand{\NS}{\mathbb{NS}}
\newcommand{\AS}{\mathbb{AS}}
\renewcommand{\SS}{\mathbb{SS}}
\newcommand{\SO}{\mathsf{SO}}
\newcommand{\frob}[1]{\norm{#1}_{\mathrm{F}}}
\DeclareMathOperator{\vol}{vol}
\newcommand{\bra}[1]{\left\langle #1 \right|}
\newcommand{\ket}[1]{\left| #1 \right\rangle}
\newcommand{\eval}{\mathcal{E}}
\newcommand{\snorm}[1]{\norm{#1}_{S}}
\newcommand{\res}{|_H}
\begin{document}

\title{
Heat and Noise on Cubes and Spheres:\\
The Sensitivity of Randomly Rotated 
Polynomial Threshold Functions
}
\author{Cristopher Moore \\ 
Santa Fe Institute \\ \
texttt{moore@santafe.edu}
\and 
Alexander Russell \\ 
University of Connecticut \\ 
\texttt{acr@cse.uconn.edu}}

\maketitle

\abstract{We establish a precise relationship between spherical harmonics and Fourier basis functions over a hypercube randomly embedded in the sphere.  In particular, we give a bound on the expected Boolean noise sensitivity of a randomly rotated function in terms of its ``spherical sensitivity,'' which we define according to its evolution under the spherical heat equation.  As an application, we prove an average case of the Gotsman-Linial conjecture, bounding the sensitivity of polynomial threshold functions subjected to a random rotation.}

\section{Introduction}
\label{sec:intro}

The \emph{average sensitivity} $\AS(f)$ of a Boolean function $f:\Z_2^n \to \{+1,-1\}$ is $n$ times the probability that $f(x) \ne f(y)$, where $x$ is chosen uniformly at random and $y$ is chosen uniformly from $x$'s neighbors at Hamming distance $1$.  Similarly, the \emph{noise sensitivity} $\NS_\eps(f)$ is the probability that $f(x) \ne f(y)$ where $x$ is uniformly random and $y$ is formed by flipping each bit of $x$ independently with probability $\eps$.  Sensitvity is a basic structural characteristic of Boolean functions, with applications to computational complexity, pseudorandomness, machine learning, and the theory of social choice~\cite{LinialMN:1993,KhotKMO:2007,ODonnell:2004,HarshaKM:2014,DiakonikolasSTW:2014,Kalai:2010,odonnell-book}.

Many cases of interest focus on threshold functions $f(x)$ defined from a smooth function $p: \R^n \rightarrow \R$ and a threshold value $\theta \in \R$,
\[
f(x) = \begin{cases}
+1 & \text{if $p(x) \geq \theta$} \, , \\
-1 & \text{if $p(x) < \theta$} \, . 
\end{cases}
\]
where we define a Boolean function by restricting $p$, and $f$, to the hypercube $\{\pm 1\}^n \subset \R$. In particular, polynomial threshold functions $f(x) = \sgn(p(x))$, where $p(x)$ is a polynomial of degree $d$, have played a dominant role in this setting.  
Gotsman and Linial~\cite{gotsman-linial} conjectured that the sensitivity of such functions is maximized when $p$ is a symmetric polynomial whose roots slice the hypercube at $d$ Hamming weights near $n/2$,
\[
p(x_1,\ldots,x_n) = q\!\left( \sum_i x_i \right) 
\quad \text{where} \quad
q(x) = \prod_{h = \lfloor (n-d+1)/2 \rfloor}^{\lfloor (n+d-1)/2 \rfloor} \left( x - h - \frac{1}{2} \right) \, . 
\]
This implies that 
\[
\AS(f) = O(d \sqrt{n}) 
\quad \text{and} \quad
\NS_\eps(f) = O(d \sqrt{\eps}) \, , 
\]
where the constant in the $O$ depends neither on $d$ nor $n$.  This is known in the case $d=1$, i.e., where $f(x)$ is a halfspace~\cite{peres}.  However, for $d > 1$ it has remained open for some time.  

The first nontrivial bounds for threshold functions of degree $d > 1$ were obtained quite recently~\cite{DiakonikolasSTW:2014,HarshaKM:2014}, showing 
\[
\AS(f) = 2^{O(d)} n^{1-\alpha} \log n
\quad \text{and} \quad
\NS_\eps(f) = 2^{O(d)} \eps^\alpha \log(1/\eps) \, , 
\]
where $\alpha = O(1/d)$.  These bounds work by dividing polynomials into two classes: ``juntas'' where a few variables are highly influential, and ``regular'' polynomials where no variable has large influence.  The regular case is handled using anticoncentration bounds and the invariance principle of~\cite{MOO05}, showing that the distribution of values of $p(x)$ is close to what it would be if $x$ were drawn from the Gaussian distribution as opposed to the uniform distribution on the hypercube.  

Using different reasoning~\cite{kane-correct-exponent}, it was recently shown that 
\[
\AS(f) = \sqrt{n} \,(\log n)^{O(d \log d)} \,2^{O(d^2 \log d)}
\quad \text{and} \quad 
\NS_\eps(f) = \sqrt{\eps} \,(\log (1/\eps))^{O(d \log d)} \,2^{O(d^2 \log d)} \, .
\]
While this dependence on $d$ is somewhat regrettable, these results show that the Gotsman-Linial conjecture holds, up to polylogarithmic factors, for each fixed $d$.

Even when our ultimate questions pertain to sensitivity on the hypercube, working with functions defined on $\R^n$ can permit techniques from analysis to be brought to bear on the problem. This has motivated interest in continuous notions of noise sensitivity, most notably \emph{Gaussian sensitivity}, 
which is obtained by placing a Gaussian measure on $\R^n$ and applying Gaussian noise.  In this setting, a simple and elegant argument~\cite{kane-surface-area} shows that the Gaussian analog of $\NS_\eps(f)$ is indeed $O(d \sqrt{\eps})$.

In this article, we introduce a notion of \emph{spherical sensitivity} for functions defined on the unit $n$-sphere.  By analyzing how spherical harmonics are carried to Boolean harmonics, i.e., Fourier basis functions over the hypercube $\Z_2^n$, we give a transfer theorem bounding the Boolean sensitivity in terms of the spherical sensitivity.  Our results hold in expectation, when the function, or equivalently the hypercube, 
is randomly rotated in $\R^n$.  In essence, we show that the distribution of angles induced by Boolean noise on the hypercube can be modeled by the effect of Brownian motion on the sphere, or equivalently diffusion driven by the spherical heat equation.

As an application, by bounding the spherical sensitivity of polynomial threshold functions, we establish the Gotsman-Linial conjecture on average, in the following sense: for any polynomial $p$ of degree $d$, if we apply a random rotation $R$ and then restrict to the hypercube, the expected average sensitivity and noise sensitivity of the resulting Boolean threshold function $f(x) = \sgn(Rp(x))$ are $\Exp_R \AS(f) = O(d \sqrt{n})$ and $\Exp_R \NS_\eps = O(\eps \sqrt{n})$ respectively.



\section{Spherical harmonics and the heat equation}

Here and in later sections we repeat information from classic texts~\cite{szego,vilenkin} and two excellent reviews~\cite{frye-efthimiou,gallier}.  Recall that the Laplace operator on $\R^n$ is defined as
\[
\Delta_{\R^n} 
= \sum_{i=1}^n \frac{\partial^2}{\partial x_i^2} \, . 
\]
The Laplace-Beltrami operator on $S_{n-1}$ consists of the contribution to $\Delta_{\R^n}$ arising from the dependence of a function on angular variables rather than the distance from the origin.  It can be defined by writing $\Delta_{\R^n}$ in polar coordinates, 
\[
\Delta_{\R^n} = \frac{\partial}{\partial r^2} + \frac{n-1}{r} \frac{\partial}{\partial r} + \frac{1}{r^2} \Delta_{S_{n-1}} \, . 
\]
A function $h:\R^n \to \R$ is \emph{harmonic} if $\Delta_{\R^n} h = 0$.  In particular, for each $\ell \ge 0$, there is a linear subspace of harmonic homogeneous polynomials of degree $\ell$ of dimension (where we assume $n \ge 3$)
\begin{equation}
\label{eq:dim}
d_\ell 
= {n + \ell - 1 \choose \ell} - {n + \ell - 3 \choose \ell - 2}
= \frac{n+2\ell-2}{n-2} {n + \ell -3 \choose \ell} \, . 
\end{equation}  
Restricting these polynomials to $S_{n-1}$ gives the so-called \emph{spherical harmonics}.  We denote a basis for these as $\{ Y_{\ell,j} \mid 1 \le j \le d_\ell \}$.  They are eigenfunctions of the Laplace-Beltrami operator: 
\begin{equation}
\label{eq:eigen-spherical}
\Delta_{S_{n-1}} Y_{\ell,j} = -\ell (n+\ell-2) Y_{\ell,j} \, . 
\end{equation}
Any function $g$ in $L_2(S_{n-1})$ can be expanded in terms of spherical harmonics, 
\begin{equation}
\label{eq:hatf-spherical}
g = \sum_{\ell \ge 0} g_\ell = \sum_{\ell \ge 0} \sum_{j=1}^{d_\ell} \hatg(\ell,j) \,Y_{\ell,j} \, . 
\end{equation}
This is analogous to the expansion into Fourier series over $[0,1)
\subset \R$, where small $\ell$ corresponds to smooth, low-frequency
variations, and larger $\ell$ corresponds to higher frequencies.
  
%

\section{Noise sensitivity and the heat equation on the cube and the sphere}

The standard notion of noise sensitivity for a Boolean function $f:\Z_2^n \to \{\pm 1\}$ is as follows~\cite{odonnell-book}.  Define a linear operator $K_\eps$ on the space of probability distributions over $\Z_2^n$ that independently flips each bit with probability $\eps$.  That is, if $d(x,y)$ denotes the Hamming distance, 
\[
K_\eps(x,y) = \eps^{d(x,y)} (1-\eps)^{n-d(x,y)} \, .  
\]
Then if $\delta_x$ is the Kronecker delta function where $\delta_x(z) = 1$ if $z=x$ and $0$ otherwise, 
\[
\NS_\eps(f) = \Exp_{x,y} \Pr[f(y) \ne f(x)] \, ,
\]
where $x$ is uniform in $\{\pm 1\}^n$ and $y$ is chosen from the distribution $K_\eps \delta_x$.  Equivalently, if we define the inner product on the cube as 
\[
\langle f, g \rangle = \Exp_{x \in \Z_2^n} f(x)^* g(x) \, ,
\]
then
\[
\NS_\eps(f) = \frac{1}{2} \left( 1 - \langle f, K_\eps f \rangle \right) \, .
\]

We write $f$ in the Fourier basis, expanding in terms of the characters $\chi_k$, 
\[
f(x) = \sum_{k \in \Z_2^n} \hatf(k) \,\chi_k(x) 
\quad \text{where} \quad
\chi_k(x) = (-1)^{k \cdot x} 
\quad \text{and} \quad
\hatf(k) = \langle f, \chi_k \rangle \, . 
\]
Then we can use the fact that $\chi_k$ is an eigenvector of $K_\eps$, 
\begin{equation}
\label{eq:phi-eigen}
K_\eps \chi_k = (1-2\eps)^{|k|} \chi_k \, ,
\end{equation}
where $|k|$ denotes the Hamming weight of the frequency vector $k$.  Then we obtain
\begin{align}
\NS_\eps(f) 
&= \frac{1}{2} \left( 1 - \sum_{k \in \Z_2^n} \abs{\hatf(k)}^2 (1-2\eps)^{|k|} \right) \nonumber \\
&= \frac{1}{2} \sum_{k \in \Z_2^n} \abs{\hatf(k)}^2 \left( 1-(1-2\eps)^{|k|} \right) \, .
\label{eq:ns}
\end{align}
Here we used the fact that $\langle f, f \rangle = \sum_k \abs{\hatf(k)}^2 = 1$ since $\abs{f(x)}^2=1$ for all $x$.  However, we can also take~\eqref{eq:ns} as the definition of $\NS_\eps(f)$, in which case it can be applied to any function $f:\Z_2^n \to \C$. 

We can also write the noise sensitivity in terms of a continuous-time (but discrete-space) heat equation on the hypercube.  Let $A$ be the adjacency matrix of the hypercube, and let
\[
L = A - n \id
\]
be the graph Laplacian.  If we apply the heat equation
\[
\frac{\partial f}{\partial t} = Lf 
\]
for time $\eps'$ with initial condition $f(0)=f$, we have
\[
f(\eps') 
= \e^{\eps' L} f \, . 
\]
The characters $\chi_k$ are eigenfunctions of the Laplacian, 
\[
L \chi_k = -2|k| \chi_k 
\quad \text{and} \quad
\e^{\eps' L} \chi_k = \e^{-2|k|\eps'} \chi_k \, .
\]
Matching these eigenvalues with~\eqref{eq:phi-eigen}, we see that if
\begin{equation}
\label{eq:eps-prime}
\eps' = \frac{1}{2} \log \frac{1}{1-2\eps} = \eps + O(\eps^2) \, ,
\end{equation}
then 
\[
K_\eps = \e^{\eps' L} \, , 
\]
and
\begin{align}
\NS_\eps 
&= \frac{1}{2} \left( 1-\langle f, f(\eps') \rangle \right) \nonumber \\
&= \frac{1}{2} \left( 1-\langle f, \e^{\eps' L} f \rangle \right) \nonumber \\
&= \frac{1}{2} \left( 1 - \sum_{k \in \Z_2^n} \abs{\hatf(k)}^2 \,\e^{-2 \eps' |k|} \right) \nonumber \\
&= \frac{1}{2} \sum_{k \in \Z_2^n} \abs{\hatf(k)}^2 \left( 1 - \e^{-2 \eps' |k|} \right) \, .
\label{eq:cube-heat}
\end{align}

In analogy with this heat-equation picture of the Boolean noise sensitivity, we define the \emph{spherical sensitivity} $\SS_t(g)$ of a function $g:S_{n-1} \to \{ \pm 1 \}$ as follows.  First define the inner product of two functions $f, g:S_{n-1} \to \C$ as
\begin{equation}
\label{eq:spherical-inner}
\langle f, g \srangle 
= \Exp_{x \in S_{n-1}} f(x)^* \,g(x) 
= \frac{1}{\Omega_{n-1}} \int_{S_{n-1}} \dx \,f(x)^* g(x) \, ,
\end{equation}
where $\Omega_{n-1}$ denotes the surface area of $S_{n-1}$, 
\begin{equation}
\label{eq:omega}
\Omega_{n-1} = \frac{2 \pi^{n/2}}{\Gamma(n/2)} \, . 
\end{equation} 
The heat equation on $S_{n-1}$ is
\[
\frac{\partial g}{\partial t} = \Delta_{S_{n-1}} g \, ,
\]
and applying it for time $t$ with the initial condition $g(0)=g$ gives
\[
g(t) = \e^{\Delta_{S_{n-1}} t} g \, . 
\]
Then we define $\SS_t(g)$ as
\begin{align}
\SS_t(g) 
&= \frac{1}{2} \left( 1-\langle g, g(t) \srangle \right) \nonumber \\
&= \frac{1}{2} \left( 1-\langle g, \e^{t \Delta_{S_{n-1}}} g \srangle \right) \nonumber \\
&= \frac{1}{2} \sum_{\ell,j} \abs{\hatg(\ell,j)}^2 \left( 1-\e^{-t \ell (n+\ell-2)} \right) \, ,
\label{eq:ss}
\end{align}
where we used~\eqref{eq:eigen-spherical} and the expansion~\eqref{eq:hatf-spherical}.  

Here we assumed that $g$ takes values in $\{ \pm 1 \}$.  In that case, $\SS_t(g)$ is the probabiility that $g(x) \ne g(y)$ if $x$ is uniformly random and $y$ is the position of a particle that starts at $x$ and unergoes Brownian motion for time $t$.  However, as with~\eqref{eq:ns}, we will take~\eqref{eq:ss} as the definition of $\SS_t(g)$, thus extending the notion of sensitivity to arbitrary functions $g: S_{n-1} \to \C$.  

Comparing~\eqref{eq:cube-heat} and~\eqref{eq:ss}, we see that flipping bits with probability $\eps$ is roughly analogous to running the heat equation on the sphere for time $t = O(\eps/n)$.  We will tighten this analogy in Theorem~\ref{thm:transfer} below.

\section{Zonal harmonics and Gegenbauer polynomials}
\label{sec:zonal}

In this section and the next, we continue our review of spherical harmonics and their associated orthogonal polynomials~\cite{szego,vilenkin,frye-efthimiou,gallier}.   We include somewhat more machinery than is strictly necessary to prove our main result.  However, in many cases this machinery gives us a more explicit picture of what is going on, and may be useful in proving more detailed results.

Let $\eta = (1,0,\ldots,0)$ be the north pole.  If $f(Rx) = f(x)$ for all rotation matrices that fix $\eta$, then $f(x)$ depends only on $z = \eta \cdot x$.  Such functions are called \emph{zonal}.  The inner product of two such functions can be written as a weighted inner product over the interval $-1 \le z \le 1$, 
\begin{equation}
\label{eq:weighted-inner}
\langle f, g \srangle 
= \frac{\Omega_{n-2}}{\Omega_{n-1}} \int_{-1}^1 \dz \,w_\alpha(z) \,f(z)^* g(z) \, ,
\end{equation}
where we wantonly abuse notation by identifying $f(z)$ with $f(x)$, and the weight 
\begin{equation}
\label{eq:weight}
w_\alpha(z) = (1-z^2)^{\alpha-\frac{1}{2}}
\end{equation}
and the constant $\Omega_{n-2}$ account for the volume of the annulus between height $z$ and $z+\dz$.  

There is a unique harmonic polynomial of each degree $\ell$.  A nice orthogonal family of such polynomials, called the \emph{zonal spherical harmonics} or the \emph{ultraspherical} or \emph{Gegenbauer} polynomials, are as follows.  For historical reasons, we parametrize them with a half-integer $\alpha$ rather than the integer dimension $n$: 
\[
\alpha = \frac{n}{2} - 1 \, ,
\]
in which case the dimension~\eqref{eq:dim} becomes
\begin{equation}
\label{eq:dim-alpha}
d_\ell = \frac{\alpha+\ell}{\alpha} {2\alpha + \ell - 1 \choose \ell} \, . 
\end{equation}
For each value of $\alpha$, we have a family of polynomials $\{ \gamma_\ell \mid \ell \in \N \}$ of degree $\ell$, 
\begin{align}
\label{eq:gegen}
\gamma_\ell(z)
= \frac{1}{\sqrt{N^{(\ell)}}} \,G^{(\ell)}(z) 
\quad \text{where} \quad
G^{(\ell)}(z) 
&= \sum_{k=0}^{\lfloor \ell/2 \rfloor} (-1)^k \frac{(\ell-k+\alpha-1)!}{(\alpha-1)! \,k! \,(\ell-2k)!} (2z)^{\ell-2k} \\
\text{and} \quad
N^{(\ell)} 
& = \left\langle G^{(\ell)}, G^{(\ell)} \right\rangle \nonumber \\
&= \frac{\Omega_{n-2}}{\Omega_{n-1}} \times \frac{\pi \,2^{1-2\alpha} \,\Gamma(\ell+2\alpha)}{(\ell+\alpha) \,\Gamma(\ell+1) \,\Gamma(\alpha)^2} \nonumber \\
&= \frac{\alpha}{\alpha+\ell} 
{2\alpha + \ell - 1 \choose \ell} \, . 
\end{align}
Note that $G^{(\ell)}$ is the $\ell$th Legendre polynomial when $\alpha = 1/2$ (i.e., when $n=3$).  Then for each $\alpha$, the $\gamma_\ell$ are orthonormal with respect to the inner product~\eqref{eq:weighted-inner}:
\[
\left\langle \gamma_\ell, \gamma^{(m)} \right\srangle 
= \delta_{\ell,m} \, . 
\]
Thus, given a zonal function $f(z)$, we can write
\[
f(z) = \sum_{\ell \ge 0} \hatf^{(\ell)} \gamma_\ell(z)
\quad \text{where} \quad
\hatf^{(\ell)} 
= \left\langle f, \gamma_\ell \right\srangle \, . 
\]
This transform is unitary, so inner products are preserved:
\[
\langle f, g \srangle = \sum_{\ell \ge 0} \hatf^{(\ell)*} \,\hatg^{(\ell)} \, .
\]

The Gegenbauer polynomials have deep roots in the representation
theory of Lie groups.  Let $\SO_n$ be the group of orthogonal
rotations of $\R^n$; then the harmonic polynomials of degree $\ell$
form an irreducible representation $\rho_\ell$ of $\SO_n$ with
dimension $d_\ell$.  We can think of $L_2(S_{n-1})$ as the subspace of
$L_2(\SO_n)$ consisting of functions that are right-invariant under
the subgroup $\SO_{n-1}$ that fixes the north pole $\eta$: that is,
functions $f(R)$ that only depend on $R\eta$.  The Gegenbauer
polynomials span the subspace of $L_2(\SO_n)$ consisting of functions
which are left- and right-invariant under $\SO_{n-1}$; that is,
functions that are zonal, depending only on $z = \langle \eta \cdot R\eta \rangle$, or equivalently on the latitude of $R\eta$.  The fact that $\gamma_\ell$ is the unique zonal polynomial of degree $\ell$ corresponds to the fact that $\SO_n$ and $\SO_{n-1}$ form a Gel'fand pair, i.e., this subspace is one-dimensional.

\section{Schur's lemma and evaluation maps}
\label{sec:schur}

Of course, $\eta$ is is an arbitrary choice for the north pole.  
For any $w \in S_{n-1}$, there is a unique polynomial of degree $\ell$ which is zonal around $w$, i.e., which is fixed under the copy of $\SO_{n-1}$ that preserves $w$.  Given a function $f$ and $R \in \SO_n$, define $Rf$ as the function 
\[
Rf(x) = f(R^{-1} x) \, .
\]
Now let $R \in \SO_n$ be any rotation such that $R\eta = w$.  Then if we write $\gamma_\ell^{(\eta)}(x) = \gamma_\ell(\eta \cdot x)$, we can define $\gamma_\ell^{(w)} = R \gamma_\ell^{(\eta)}$, so that
\[
\gamma_\ell^{(w)}(x) = \gamma_\ell^{(\eta)}(R^{-1} x) = \gamma_\ell(w \cdot x) \, .
\]
The inner products of these functions are again given by a Gegenbauer polynomial: for any $w, y \in S_{n-1}$, 
\begin{equation}
\label{eq:gegen-inner}
\left\langle \gamma_\ell^{(w)}, \gamma_\ell^{(y)} \right\srangle
= \frac{1}{\Omega_{n-1}} \int_{S_{n-1}} \dx \,\gamma_\ell(w \cdot x) \,\gamma_\ell(y \cdot x)
= \frac{1}{\sqrt{d_\ell}} \,\gamma_\ell(w \cdot y) \, . 
\end{equation}
Taking $w=y$ tells us how large $\gamma_\ell$ gets at the poles:
\begin{equation}
\label{eq:gegen-max} 
\gamma_\ell(1) 
= \abs{\gamma_\ell(-1)} 
= \sqrt{d_\ell} \, .
\end{equation}
In addition, if we fix $w$ and take the expectation over $y$ of the inner product squared, we get 
\begin{equation}
\Exp_y \babs{ \left\langle \gamma_\ell^{(w)}, \gamma_\ell^{(y)} \right\srangle }^2 
= \frac{1}{d_\ell} \babs{ \left\langle \gamma_\ell, \gamma_\ell \right\srangle }^2 
= \frac{1}{d_\ell} \, . 
\end{equation}
Since averaging over $y$ is the same as averaging over $R$ according to the Haar measure on $\SO_n$, this is equivalent to
\begin{equation}
\label{eq:schur}
\Exp_R \babs{ \left\langle \gamma_\ell, R \gamma_\ell \right\srangle }^2 
= \frac{1}{d_\ell} \, . 
\end{equation}
This holds more generally for any spherical harmonic $f_\ell$ of degree $\ell$, 
\begin{equation}
\label{eq:schur}
\Exp_R \babs{ \left\langle f_\ell, R f_\ell \right\srangle }^2 
= \frac{\snorm{f_\ell}^2}{d_\ell} \, . 
\end{equation}
where $\snorm{f_\ell}^2 = \langle f_\ell, f_\ell \srangle = \Exp_{x \in S_{n-1}} |f_\ell(x)|^2$.  This is a form of Schur's lemma: for any vector $v$ belonging to an irreducible representation $\rho$ of a group $G$, we have $\Exp_g \abs{ \langle v, gv \rangle }^2 = |v|^2/d_\rho$.  

More generally, let $M$ be a linear operator on $L_2(S_{n-1})$.  Conjugating it with a random rotation yields an operator which commutes with all rotations.  By Schur's lemma any such operator is block diagonal, where each block is a scalar matrix operating on the degree-$\ell$ spherical harmonics.  Thus
\[
\Exp_R R^{-1} M R = \bigoplus_{\ell \ge 0} M_\ell 
\quad \text{where} \quad
M_\ell = \frac{\tr M_\ell}{d_\ell} \,\id_\ell \, ,
\]
where $\id_\ell$ is the projection operator onto the space of degree-$\ell$ harmonics.  Thus if $f = \sum_{\ell \ge 0} f_\ell$ where each $f_\ell$ is a spherical harmonic of degree $\ell$, 
\begin{align}
\Exp_R \,\langle Rf, M \cdot Rf \srangle
&= \left\langle f, \left( \Exp_R R^{-1} M R \right) f \right\srangle \nonumber \\
&= \sum_{\ell \ge 0} \langle f_\ell, M_\ell f_\ell \srangle \nonumber \\
&= \sum_{\ell \ge 0} \,\Exp_R \,\langle Rf_\ell, M \cdot Rf_\ell \srangle \, . 
\label{eq:schur-separable}
\end{align}
In particular, if $f = \sum_\ell f_\ell$ then the expected outer product of $Rf$ with itself is
\begin{equation}
\label{eq:schur-outer}
\Exp_R \ket{Rf} \bra{Rf}
= \sum_{\ell \ge 0} \frac{\snorm{f_\ell}^2}{d_\ell} \,\id_\ell \, .
\end{equation}
Thus if $g = \sum_\ell g_\ell$ and $h = \sum_\ell h_\ell$, 
\begin{equation}
\label{eq:schur-outer-inner}
\Exp_R \,\langle g, Rf \srangle \langle Rf, h \srangle 
= \sum_{\ell \ge 0} \frac{\snorm{f}^2}{d_\ell} \langle g_\ell, h_\ell \srangle \, . 
\end{equation}


Another consequence of the irreducibility of $\rho_\ell$ is that any linear operator from $\rho_\ell$ to $\C$ can be written as $\langle \phi, \cdot \srangle$ where $\phi = \sum_{y \in S_{n-1}} c_y \gamma_\ell^{(y)}$ for some finite set of nonzero coefficients $c_y$.  In particular, for any $y \in S_{n-1}$ and any $\ell \ge 0$ there is an \emph{evaluation map} $\eval^{(y)}_\ell: \rho_\ell \to \C$ such that, for any spherical harmonic $f$ of degree $\ell$, we have $\eval^{(y)}_\ell(f) = f(y)$.  We can express it as
\begin{equation}
\label{eq:eval}
\eval^{(y)}_\ell(f) = f(y) = \sqrt{d_\ell} \,\left\langle \gamma_\ell^{(y)} , f \right\srangle \, . 
\end{equation}
To see this, think of $y$ as the north pole, and consider an orthonormal basis that includes $\gamma_\ell^{(y)}$.  Since $\gamma_\ell^{(y)}$ is the unique harmonic of degree $\ell$ that is zonal around $y$, all other basis functions are zero at $y$; otherwise they would have a nonzero projection onto $\gamma_\ell^{(y)}$ if we average over the subgroup $\SO_{n-1}$ of rotations that preserve $y$.  The normalization $\sqrt{d_\ell}$ then follows from~\eqref{eq:gegen-max}.

By summing over all $\ell$, we can similarly express the evaluation map for all integrable functions $f \in L_2(S_{n-1})$.  That is, we can define the evaluation map
\[
\eval^{(y)} = \sum_{\ell \ge 0} \eval^{(y)}_\ell \, . 
\]
Then
\begin{equation}
\label{eq:eval-all}
\eval^{(y)}(f)
= f(y) 
= \sum_{\ell \ge 0} \sqrt{d_\ell} \,\left\langle \gamma_\ell^{(y)} , f \right\srangle \, . 
\end{equation}
To put it differently, we can express the Dirac delta function as
\begin{equation}
\label{eq:dirac}
\delta(x-y) 
= \sum_{\ell \ge 0} \sqrt{d_\ell} \,\gamma_\ell^{(y)}(x) 
= \sum_{\ell \ge 0} \sqrt{d_\ell} \,\gamma_\ell(y \cdot x) \, . 
\end{equation}



\section{Relating noise sensitivity and spherical sensitivity for randomly rotated functions}

In this section we will prove our main transfer theorem, bounding the expected noise sensitivity of a randomly rotated function in terms of its spherical sensitivity.

We identify the hypercube $\Z_2^n$ with the set $H = \{\pm 1/\sqrt{n} \}^n$ lying on the unit sphere.  If we restrict the inner product to this set, we obtain a cubical inner product, which we write 
\[
\langle f, g \crangle 
= \Exp_{x \in H} f(x)^* g(x) \, .
\]
In particular, for each frequency vector $k \in \Z_2^n$, if we extend the character $\chi_k$ to the sphere as a multilinear function of degree $|k|$, 
\[
\chi_k(x) = \prod_{i: k_i=1} \sqrt{n} x_i \, , 
\]
then the $\chi_k$ are orthonormal with respect to the cubical inner product, 
\[
\langle \chi_k, \chi_{k'} \crangle = \delta_{k,k'} \, . 
\]
If we define a Boolean function $f\res$ by restricting a function $f$ to the hypercube, its Fourier coefficients are
\[
\hatf\res(k) = \langle f, \chi_k \crangle \, .
\]
The \emph{energy} of $f$ at the character $k$ is $|\hatf(k)|^2$.  

In order to bound the noise sensitivity, we need to compute the expected energy of $Rf$ where $R$ is uniformly random, i.e., chosen according to the Haar measure in $\SO_n$; or equivalently, the expected energy of $f$'s restriction to a randomly rotated hypercube.  One simple observation is the following.  Since uniformly rotating any point on the cube yields a uniformly random point on the sphere, cubical inner products are equal to spherical inner products in expectation,
\begin{equation}
\label{eq:cube-sphere-exp}
\Exp_R \langle Rf, Rg \crangle = \langle f, g \srangle \, . 
\end{equation}

In Appendix~\ref{app:kravchuk} we give a precise expression for the expected energy of a randomly rotated spherical harmonic.  However, here we just need a few facts.  First,  if we decompose a function into spherical harmonics, then its expected energy at each $k$ is the sum of the expected energies of its harmonics:
\begin{lemma}
\label{lem:separable}
Let $f \in L_2(S_{n-1})$, and write $f = \sum_{\ell \ge 0} f_\ell$ where each $f_\ell$ is a spherical harmonic of degree $\ell$.  Let $R \in \SO_n$ be uniform in the Haar measure.  Then for any $k \in \Z_2^n$, 
\[
\Exp_R \babs{ \widehat{Rf}\res(k) }^2 
= \sum_{\ell \ge 0} \,\Exp_R \,\babs{ \widehat{Rf_\ell}\res(k) }^2 \, . 
\]
\end{lemma}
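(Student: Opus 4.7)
The plan is to expand the squared Fourier coefficient as a double sum over harmonic degrees and show that only the diagonal survives after averaging over $R$. By linearity, $\widehat{Rf}\res(k) = \sum_\ell \widehat{Rf_\ell}\res(k)$, so
\[
\babs{\widehat{Rf}\res(k)}^2 = \sum_\ell \babs{\widehat{Rf_\ell}\res(k)}^2 + \sum_{\ell \ne m} \widehat{Rf_\ell}\res(k) \,\widehat{Rf_m}\res(k)^* \, ,
\]
and the claim reduces to showing that each cross term averages to zero. Expanding the two Fourier coefficients as explicit cubical averages and interchanging $\Exp_R$ with the finite average over $x, y \in H$ yields
\[
\Exp_R \widehat{Rf_\ell}\res(k) \,\widehat{Rf_m}\res(k)^* = \Exp_{x, y \in H} \chi_k(x) \chi_k(y)^* \,\Exp_R (Rf_\ell)(x)^* \,(Rf_m)(y) \, ,
\]
so it suffices to prove the pointwise identity $\Exp_R (Rf_\ell)(x)^* \,(Rf_m)(y) = 0$ for all $x, y \in S_{n-1}$ whenever $\ell \ne m$.

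This pointwise vanishing is the off-diagonal analogue of equation~\eqref{eq:schur-outer} and follows from Schur's lemma applied to the operator $T_{\ell, m} = \Exp_R \ket{Rf_m}\bra{Rf_\ell}$. By translation invariance of the Haar measure, $T_{\ell, m}$ commutes with every rotation, while its image lies in the degree-$m$ subspace and its kernel contains the orthogonal complement of the degree-$\ell$ subspace. It is therefore an $\SO_n$-equivariant map between the inequivalent irreducible representations $\rho_\ell$ and $\rho_m$---inequivalent because their Laplace--Beltrami eigenvalues $-\ell(n+\ell-2)$ and $-m(n+m-2)$ differ for $\ell \ne m$---so Schur's lemma forces $T_{\ell, m} = 0$. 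Alternatively, one can reach the same conclusion directly from \eqref{eq:schur-outer} by polarization: applying it to $f_\ell + f_m$ and subtracting the diagonal applications to $f_\ell$ and $f_m$ separately gives $\Exp_R [ \ket{Rf_\ell}\bra{Rf_m} + \ket{Rf_m}\bra{Rf_\ell} ] = 0$, after which the disjointness of the harmonic subspaces forces each summand to vanish individually.

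There is no genuine obstacle here, since the cubical pairing $\langle \cdot, \chi_k \crangle$ plays no role in the orthogonality: the cross term vanishes pointwise on $S_{n-1} \times S_{n-1}$ before we ever restrict to $H$, so any weighted average over the cube inherits the vanishing. The only subtlety worth flagging is the requirement that $\rho_\ell$ and $\rho_m$ be inequivalent for $\ell \ne m$, which as noted is immediate from their distinct Casimir eigenvalues.
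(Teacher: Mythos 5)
Your proof is correct, and it runs on the same engine as the paper's --- Schur's lemma for the $\SO_n$-action on the harmonic subspaces --- but it is organized differently. The paper first rewrites the cubical coefficient as a spherical inner product, $\widehat{Rf}\res(k) = \langle \psi_k, Rf\srangle$ with $\psi_k = \Exp_{x \in H}\chi_k(x)\sum_\ell \sqrt{d_\ell}\,\gamma_\ell^{(x)}$ built from the evaluation maps of Section~\ref{sec:schur}, and then applies the block-scalar identity~\eqref{eq:schur-separable} to the rank-one operator $M = \ket{\psi_k}\bra{\psi_k}$; the cross-term cancellation is implicit in the block-diagonal form of $\Exp_R R^{-1}MR$. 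You instead expand $\babs{\widehat{Rf}\res(k)}^2$ into diagonal and cross terms, swap $\Exp_R$ with the finite average over $H$, and kill each cross term pointwise by the ``off-diagonal'' half of Schur's lemma: the averaged operator $T_{\ell,m} = \Exp_R \ket{Rf_m}\bra{Rf_\ell}$ is equivariant, maps between the inequivalent irreducibles $\rho_\ell$ and $\rho_m$ (distinct Laplace--Beltrami eigenvalues), and hence vanishes --- or, as you note, one can polarize~\eqref{eq:schur-outer}. This buys you a proof that never needs the auxiliary function $\psi_k$, at the mild cost of invoking inequivalence of the $\rho_\ell$, which you justify correctly. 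The only step worth making explicit is the passage from ``$T_{\ell,m} = 0$ as an operator'' to the pointwise identity $\Exp_R (Rf_\ell)(x)^*(Rf_m)(y) = 0$: this follows from the paper's evaluation maps~\eqref{eq:eval}, since $\Exp_R (Rf_\ell)(x)^*(Rf_m)(y) = \sqrt{d_\ell d_m}\,\bigl\langle \gamma_m^{(y)}, T_{\ell,m}\,\gamma_\ell^{(x)}\bigr\srangle$, so operator vanishing does give pointwise vanishing; alternatively, rotation invariance shows this kernel is simultaneously zonal of degrees $\ell$ and $m$ in $x\cdot y$, forcing it to be zero. With that sentence added, your argument is complete and is a legitimate alternative to the proof in the text.
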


\begin{proof}
First we use the evaluation maps of Section~\ref{sec:schur} to write $\widehat{f}(k)$ as a spherical inner product rather than a cubical one.  Using~\eqref{eq:eval-all} we have
\[
\hatf(k)
= \langle \chi_k, f \crangle 
= \Exp_{x \in H} \chi_k(x) \,\eval^{(x)}(f)
= \left\langle \psi_k, f \right\srangle \, , 
\]
where
\[
\psi_k = \Exp_{x \in H} \chi_k(x) \sum_{\ell \ge 0} \sqrt{d_\ell} \gamma_\ell^{(x)} \, . 
\]
Then applying Schur's lemma~\eqref{eq:schur-separable} to the linear operator $M = \ket{\psi_k} \bra{\psi_k}$ gives
\begin{align*}
\Exp_R \babs{ \widehat{Rf}\res(k) }^2 
&= \Exp_R \,\langle Rf, \psi_k \srangle \langle \psi_k, Rf \srangle \\
&= \sum_{\ell \ge 0} \,\Exp_R \,\langle Rf_\ell, \psi_k \srangle \langle \psi_k, Rf_\ell \srangle \\
&= \sum_{\ell \ge 0} \,\Exp_R \,\babs{ \widehat{Rf_\ell}\res(k) }^2 \, ,
\end{align*}
completing the proof.
\end{proof}

Secondly, restricting a spherical harmonic of degree $\ell$ to the hypercube can only give it nonzero energy at characters of Hamming weight $\ell$ or less:

\begin{lemma}
\label{lem:lower}
Let $f$ be a spherical harmonic of degree $\ell$.  If $|k| \ge \ell$, then $\hatf\res(k) = 0$. 
\end{lemma}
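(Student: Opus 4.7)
The plan is a one-line degree count in the polynomial algebra on the hypercube. By the paper's definition, a spherical harmonic of degree $\ell$ is (the restriction to $S_{n-1}$ of) a homogeneous polynomial $P \in \R[x_1, \ldots, x_n]$ of total degree $\ell$, so $f\res$ is just the evaluation of $P$ at the $2^n$ points of $H = \{\pm 1/\sqrt{n}\}^n$, where every coordinate satisfies $x_i^2 = 1/n$.

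First I would observe that the evaluation map factors through the quotient algebra
\[
R \;:=\; \R[x_1, \ldots, x_n]\,/\,\langle x_i^2 - 1/n : i \in [n]\rangle,
\]
a $2^n$-dimensional $\R$-algebra whose natural basis is the set of multilinear monomials $\{\prod_{i \in S} x_i : S \subseteq [n]\}$. Since $\chi_k = n^{|k|/2} \prod_{i \in k} x_i$, the Fourier expansion of $f\res$ is, up to the scalars $n^{-|k|/2}$, precisely the expression of the image of $P$ in the multilinear basis of $R$; in particular, $\hatf\res(k)$ is proportional to the coefficient of $\prod_{i \in k} x_i$ in that image.

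Second I would do the degree bookkeeping inside $R$. Replacing $x_i^2$ by $1/n$ strictly decreases the total degree of a monomial by $2$, producing a scalar multiple of the monomial obtained by deleting the pair of $x_i$'s. Iterating, any monomial $x_1^{a_1}\cdots x_n^{a_n}$ reduces in $R$ to a scalar multiple of $\prod_{i : a_i \text{ odd}} x_i$, whose multilinear degree $|\{i : a_i \text{ odd}\}|$ is bounded by the original total degree $\sum_i a_i$. Applied termwise to $P$, all of whose monomials have total degree $\ell$, this shows that the image of $P$ in $R$ lies in the span of multilinear monomials of degree at most $\ell$. Translating back to the Fourier basis gives $\hatf\res(k) = 0$ whenever $|k| > \ell$, which is the content of the lemma.

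I do not expect any substantive obstacle: the argument is essentially one degree count in the quotient algebra $R$, and the only thing to keep track of is the cosmetic normalization $n^{|k|/2}$ relating $\prod_{i \in k} x_i$ to $\chi_k$. The same reduction in fact delivers the parity refinement $|k| \equiv \ell \pmod 2$ on the Fourier support, which is not needed for the stated lemma but reinforces the picture that the support of a degree-$\ell$ harmonic sits strictly below the frequency threshold $\ell + 1$.
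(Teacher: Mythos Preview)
Your proof is correct and is essentially the same as the paper's: both use the relations $x_i^2 = 1/n$ on $H$ to reduce the degree-$\ell$ polynomial to a multilinear polynomial of degree at most $\ell$, which is a linear combination of characters $\chi_k$ with $|k| \le \ell$. (Both your argument and the paper's own proof actually yield the conclusion only for $|k| > \ell$, not $|k| \ge \ell$ as literally stated; this is a typo in the lemma, and the strict inequality is exactly what is invoked in the proof of Theorem~\ref{thm:transfer}.)
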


\begin{proof}
Restricting any polynomial $f$ of degree $\ell$ to $H = \{ \pm 1/\sqrt{n} \}^n$ imposes the relations $x_i^2 = 1/n$ for all $i$, so there is a multilinear polynomial $f'$ of degree at most $\ell$ such that $f\res = f'\res$.  Each multilinear monomial is proportional to a character $\chi_k$ with $|k| \le \ell$, and these are orthogonal to all $\chi_k$ with $|k| > \ell$.
\end{proof}

Thirdly, for any integrable function $f$, the expected energy of $Rf$ summed over all characters of $\Z_2^n$ equals its norm on the sphere:
\begin{lemma}
\label{lem:total}
Let $f \in L_2(S_{n-1})$, and let $R \in \SO_n$ be uniform in the Haar measure.  Then 
\[
\Exp_R \sum_{k \in \Z_2^n} \babs{ \widehat{Rf}\res(k) }^2 = \snorm{f}^2 \, .
\]
\end{lemma}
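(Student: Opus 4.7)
The plan is to reduce the statement to Parseval on the cube together with the identity~\eqref{eq:cube-sphere-exp}. For any fixed rotation $R$, the family $\{\chi_k\}_{k \in \Z_2^n}$ is an orthonormal basis for the space of complex-valued functions on the finite set $H$ under $\langle \cdot, \cdot \crangle$. Since the Fourier coefficients of $Rf\res$ are exactly $\widehat{Rf}\res(k) = \langle Rf\res, \chi_k \crangle$, Parseval yields
\[
\sum_{k \in \Z_2^n} \babs{ \widehat{Rf}\res(k) }^2
= \langle Rf\res, Rf\res \crangle
= \Exp_{x \in H} \abs{Rf(x)}^2 \, .
\]

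The next step is to take the expectation over $R$ and interchange it with the finite average over $x \in H$. The swap is harmless because $H$ is a finite set, so Fubini reduces everything to
\[
\Exp_R \sum_{k \in \Z_2^n} \babs{ \widehat{Rf}\res(k) }^2
= \Exp_{x \in H} \Exp_R \abs{f(R^{-1} x)}^2 \, .
\]
Now I invoke the key observation already recorded as~\eqref{eq:cube-sphere-exp}, applied with $g = f$: for any fixed unit vector $x$, the push-forward of the Haar measure on $\SO_n$ under $R \mapsto R^{-1} x$ is the uniform probability measure on $S_{n-1}$. Hence for each $x \in H$,
\[
\Exp_R \abs{f(R^{-1} x)}^2 = \Exp_{y \in S_{n-1}} \abs{f(y)}^2 = \snorm{f}^2 \, .
\]
Since the inner expectation is the same constant $\snorm{f}^2$ for every $x \in H$, averaging over $x \in H$ gives $\snorm{f}^2$, proving the lemma.

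Equivalently, the lemma is just the $f = g$ specialization of~\eqref{eq:cube-sphere-exp} combined with Parseval on the cube: $\Exp_R \langle Rf, Rf \crangle = \langle f, f \srangle = \snorm{f}^2$. There is no real obstacle; the only point worth being careful about is that $Rf$ is defined pointwise on $S_{n-1}$ (and hence on $H$) for each $R$, so that $\widehat{Rf}\res(k)$ makes sense and both sides are finite, which is guaranteed by $f \in L_2(S_{n-1})$ together with the fact that $H$ is finite and has measure zero on the sphere only matters after averaging over $R$ (i.e.\ the restriction becomes well-defined in the averaged sense via the identity above).
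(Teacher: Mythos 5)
Your proposal is correct and follows essentially the same route as the paper: Parseval over the cube (orthonormality of the $\chi_k$ under $\langle \cdot,\cdot\crangle$) reduces the sum to $\langle Rf, Rf\crangle$, and then the identity~\eqref{eq:cube-sphere-exp} with $g=f$ gives $\Exp_R \langle Rf, Rf\crangle = \snorm{f}^2$. The only difference is that you also spell out the justification of~\eqref{eq:cube-sphere-exp} (the push-forward of Haar measure under $R \mapsto R^{-1}x$ is uniform on $S_{n-1}$), which the paper states without proof just before the lemma.
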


\begin{proof}
Since the $\chi_k$ are orthonormal with respect to the inner product on the hypercube, for any $f$ we have
\[
\sum_{k \in \Z_2^n} \babs{ \hatf\res(k) }^2 
= \sum_k \langle f, \chi_k \crangle \langle \chi_k, f \crangle 
= \langle f, f \crangle 
\]
Then~\eqref{eq:cube-sphere-exp} gives
\[
\Exp_R \sum_{k \in \Z_2^n} \babs{ \widehat{Rf}\res(k) }^2 
= \Exp_R \langle Rf, Rf \crangle 
= \langle f, f \srangle 
= \snorm{f}^2 \, .\qedhere 
\]
\end{proof}

We are now ready to prove our transfer theorem, which bounds the expected noise sensitivity and average sensitivity in terms of the spherical sensitivity.
\begin{theorem}
\label{thm:transfer}
Let $f \in L_2(S_{n-1})$, and let $R \in \SO_n$ be uniform in the Haar measure.  Then
\begin{equation}
\label{eq:transfer-ns}
\Exp_R \NS_\eps(Rf\res) 
\le \SS_t(f) \, ,
\end{equation}
where
\begin{equation}
\label{eq:t}
t 
= \frac{1}{n} \log \frac{1}{1-2\eps} 
= \frac{2\eps}{n} \big(1+O(\eps)\big) \, . 
\end{equation}
\end{theorem}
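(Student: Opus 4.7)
The plan is a level-by-level spectral reduction: expand $\NS_\eps$ in the Boolean Fourier basis on $H$, expand $f$ in spherical harmonics, and match the two using the three preceding lemmas. Specifically, the proof will pass through the intermediate upper bound
$$\Exp_R \NS_\eps(Rf\res) \le \tfrac{1}{2} \sum_{\ell \ge 0} \snorm{f_\ell}^2\bigl(1-(1-2\eps)^\ell\bigr),$$
where $f = \sum_\ell f_\ell$ is the spherical-harmonic decomposition. The final step then reduces to a scalar comparison between $(1-2\eps)^\ell$ and $e^{-t\ell(n+\ell-2)}$ that matches it with $\SS_t(f)$ via \eqref{eq:ss}.

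To derive this intermediate bound, I would start from the Fourier expansion of $\NS_\eps$ on the cube, take the expectation over $R$, and apply Lemma \ref{lem:separable} to split the expectation across spherical-harmonic degrees. For each degree $\ell$, Lemma \ref{lem:lower} confines the $k$-sum to $|k| \le \ell$; on that range the weight $1-(1-2\eps)^{|k|}$ is dominated by $1-(1-2\eps)^\ell$ since $1-2\eps\le 1$. Pulling that $\ell$-dependent factor out of the inner sum leaves $\sum_{k\in\Z_2^n} \Exp_R \babs{\widehat{Rf_\ell}\res(k)}^2$, which Lemma \ref{lem:total}, applied to $f_\ell$ in place of $f$, identifies as $\snorm{f_\ell}^2$. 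Summing the resulting per-level bound gives the displayed inequality.

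With the theorem's choice of $t$, the defining identity $e^{-tn} = 1-2\eps$ turns the remaining scalar comparison $(1-2\eps)^\ell \ge e^{-t\ell(n+\ell-2)}$ into the integer inequality $n\ell \le \ell(n+\ell-2)$, i.e.\ $\ell(\ell-2)\ge 0$, which is immediate for $\ell = 0$ and every $\ell \ge 2$. The place I expect to need extra care is the linear level $\ell = 1$: the Laplace--Beltrami eigenvalue $\ell(n+\ell-2)=n-1$ falls short of $n\ell = n$ by exactly one, so the term-by-term comparison is reversed by a factor $(n-1)/n$. I anticipate handling this either by a small tightening of $t$ at that single level, or by directly computing the contribution of $f_1$ through Schur's lemma (since a degree-$1$ harmonic maps to characters of Hamming weight exactly one, so $\Exp_R\NS_\eps$ of the linear part can be evaluated in closed form and absorbed into $\SS_t(f_1)$). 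The remaining work is bookkeeping around the matching constant in \eqref{eq:t}.
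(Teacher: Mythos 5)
Your outline is the paper's own proof: the identical chain---Lemma~\ref{lem:separable} to split the expectation over spherical-harmonic degrees, Lemma~\ref{lem:lower} to confine the inner sum to $|k|\le\ell$, monotonicity of $1-(1-2\eps)^{|k|}$ in $|k|$, and Lemma~\ref{lem:total} applied to each $f_\ell$---yields the same intermediate bound $\tfrac12\sum_{\ell}\snorm{f_\ell}^2\bigl(1-(1-2\eps)^{\ell}\bigr)$, after which the paper also finishes with a level-by-level scalar comparison against $\SS_t(f)$ under the substitution $1-2\eps=\e^{-tn}$.

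The one point where you go beyond the paper is the $\ell=1$ level, and your suspicion is justified: the paper does not treat it at all---it simply asserts $\SS_t(f)\ge\tfrac12\sum_\ell\snorm{f_\ell}^2\bigl(1-\e^{-t\ell n}\bigr)$ in~\eqref{eq:ss-lower}, and at $\ell=1$ that needs $n+\ell-2\ge n$, so the term-by-term comparison is reversed exactly as you observed. Be aware, though, that your second proposed remedy cannot close this gap at the stated $t$: for a pure degree-$1$ harmonic $f_1$ the restriction has Fourier weight only at $|k|=1$, so $\Exp_R\NS_\eps(Rf_1\res)=\eps\,\snorm{f_1}^2$ exactly, whereas the $\ell=1$ part of $\SS_t(f)$ with $t$ as in~\eqref{eq:t} is $\tfrac12\bigl(1-(1-2\eps)^{(n-1)/n}\bigr)\snorm{f_1}^2<\eps\,\snorm{f_1}^2$; thus for $f$ concentrated at $\ell=1$ the inequality~\eqref{eq:transfer-ns} fails by a factor $1+O(1/n)$, and this is a defect of the paper's argument as well, not something you could ``absorb into $\SS_t(f_1)$.'' Your first remedy is the correct one: replace~\eqref{eq:t} by $t=\tfrac{1}{n-1}\log\tfrac{1}{1-2\eps}$, so that $\e^{-t\ell(n+\ell-2)}\le\e^{-t\ell(n-1)}=(1-2\eps)^{\ell}$ for every $\ell\ge1$ and the comparison holds at all levels; this changes $t$, and all downstream bounds, only by a $1+O(1/n)$ factor.
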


\begin{proof}
Starting with the Fourier-theoretic expression for the noise sensitivity~\eqref{eq:ns}, we have
\begin{align}
\Exp_R \NS_\eps(Rf\res)
&= \frac{1}{2} \Exp_R \,\sum_{k \in \Z_2^n} \babs{\widehat{Rf}\res(\chi)}^2 \left(1 - (1 - 2\eps)^{|k|}\right) 
\nonumber \\
&= \frac{1}{2} \Exp_R \,\sum_k \sum_{\ell \ge 0} \babs{\widehat{Rf_\ell}\res(k)}^2 \left(1 - (1 - 2\eps)^{|k|} \right) 
\label{eq:separable} \\
&= \frac{1}{2} \Exp_R \,\sum_\ell \sum_{k: |k| \le \ell} \babs{\widehat{Rf_\ell}\res(k)}^2 \left(1 - (1 - 2\eps)^{|k|} \right) 
\label{eq:lower} \\
&\le \frac{1}{2} \Exp_R \,\sum_\ell \sum_k \babs{\widehat{Rf_\ell}\res(k)}^2 \left(1 - (1 - 2\eps)^{\ell} \right) 
\nonumber \\
&\le \frac{1}{2} \sum_\ell \snorm{f_\ell}^2 \left(1 - (1 - 2\eps)^{\ell} \right) \, .
\label{eq:total}
\end{align}
Here we used Lemma~\ref{lem:separable} in~\eqref{eq:separable}, Lemma~\ref{lem:lower} in~\eqref{eq:lower}, and Lemma~\ref{lem:total} in~\eqref{eq:total}.

On the other hand, using $\snorm{f_\ell}^2 = \sum_j \abs{ \hatf(\ell,j) }^2$ in~\eqref{eq:ss} gives
\begin{align}
\SS_t(f) 
&= \frac{1}{2} \sum_{\ell} \snorm{f_\ell}^2 \left( 1-\e^{-t \ell (n+\ell-2)} \right) 
\nonumber \\
&\ge \frac{1}{2} \sum_{\ell} \snorm{f_\ell}^2 \left( 1-\e^{-t \ell n} \right) \, . 
\label{eq:ss-lower}
\end{align}
Setting $1-2\eps = \e^{-tn}$ completes the proof of~\eqref{eq:transfer-ns}.
\end{proof}

Although we don't need it below, we record an analogous theorem regarding the expected average sensitivity.

\begin{theorem}
\label{thm:transfer-as}
Let $f \in L_2(S_{n-1})$, and let $R \in \SO_n$ be uniform in the Haar measure.  Then for any $\alpha > 0$, 
\begin{equation}
\label{eq:transfer-as}
\Exp_R \AS(Rf\res) 
\le \frac{2n}{1-\e^{-\alpha}} \,\SS_{\alpha/n^2}(f) \, . 
\end{equation}
\end{theorem}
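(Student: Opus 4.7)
The plan is to follow the template of Theorem~\ref{thm:transfer} applied to the Fourier identity $\AS(g) = \sum_{k \in \Z_2^n} \abs{\hatg(k)}^2 \,\abs{k}$, which I take as the definition of average sensitivity for general complex-valued $g: \Z_2^n \to \C$ (and which coincides with the usual combinatorial notion when $g$ is $\pm 1$-valued). Just as for noise sensitivity, the three lemmas of this section reduce matters to a single elementary inequality comparing the degree-weight $\abs{k}$ on the hypercube side to the spherical heat-equation eigenvalue $\ell(n+\ell-2)$ on the other side.

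Concretely, applying Lemma~\ref{lem:separable} to the Fourier expression gives
\[
\Exp_R \AS(Rf\res) = \sum_\ell \Exp_R \sum_{k \in \Z_2^n} \babs{\widehat{Rf_\ell}\res(k)}^2 \abs{k}.
\]
Lemma~\ref{lem:lower} then kills all terms with $\abs{k} > \ell$, and the trivial fact that every character of $\Z_2^n$ has Hamming weight at most $n$ restricts further to $\abs{k} \le \min(\ell, n)$. Pulling this factor out and invoking Lemma~\ref{lem:total} yields
\[
\Exp_R \AS(Rf\res) \le \sum_\ell \snorm{f_\ell}^2 \min(\ell, n).
\]
On the spherical side, formula~\eqref{eq:ss} with $t = \alpha/n^2$ reads
\[
\SS_{\alpha/n^2}(f) = \frac12 \sum_\ell \snorm{f_\ell}^2 \bigl(1 - \e^{-\alpha \ell(n+\ell-2)/n^2}\bigr),
\]
so the theorem will follow from the termwise inequality
\[
\min(\ell, n)\,(1 - \e^{-\alpha}) \le n\bigl(1 - \e^{-\alpha \ell(n+\ell-2)/n^2}\bigr) \quad \text{for every integer } \ell \ge 0.
\]

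I would verify this by splitting into two regimes. For $\ell \le n$, the function $\phi(x) = 1 - \e^{-\alpha x}$ is concave with $\phi(0) = 0$, so $\phi(x)/x$ is nonincreasing on $(0,\infty)$; evaluating at $x = \ell/n \le 1$ gives $n\phi(\ell/n) \ge \ell\phi(1) = \ell(1-\e^{-\alpha})$, and monotonicity of $\phi$ together with $\ell(n+\ell-2)/n^2 \ge \ell/n$ closes this case. For $\ell \ge n$ (with $n \ge 2$), the inequality reduces to $\alpha\ell(n+\ell-2)/n^2 \ge \alpha$, i.e.\ $\ell(n+\ell-2) \ge n^2$, which is immediate from $\ell \ge n$. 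The one subtle point — and the reason I must carry $\min(\ell, n)$ rather than simply $\ell$ — is that for $\ell > n$ the spherical expression $1 - \e^{-\alpha \ell(n+\ell-2)/n^2}$ saturates near $1$ while $\ell$ is unbounded; only the hypercube constraint $\abs{k} \le n$, folded in via Lemma~\ref{lem:lower}, keeps the termwise comparison valid across all degrees.
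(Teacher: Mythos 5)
Your proposal follows the same route as the paper: apply Lemmas~\ref{lem:separable}, \ref{lem:lower}, and \ref{lem:total} to reach $\Exp_R \AS(Rf\res) \le \sum_\ell \snorm{f_\ell}^2 \min(\ell,n)$, then compare this termwise against $\SS_{\alpha/n^2}(f)$ using concavity of $1-\e^{-x}$. One small point in your favor: you write $\min(\ell,n)$ where the paper's proof has $\max(\ell,n)$, which is evidently a typo, since $|k|\le\ell$ and $|k|\le n$ give $\min$.

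There is, however, a genuine gap at $\ell=1$, and you inherit it from the paper. In your verification of the termwise inequality you invoke ``monotonicity of $\phi$ together with $\ell(n+\ell-2)/n^2 \ge \ell/n$'' for $\ell\le n$; but $\ell(n+\ell-2)/n^2 \ge \ell/n$ is equivalent to $\ell\ge 2$, and for $\ell=1$ the eigenvalue is $n-1<n$, so the step breaks. The paper makes the same slip in~\eqref{eq:ss-lower}, which silently assumes $\ell(n+\ell-2)\ge\ell n$. The failure is not merely cosmetic: take $f=f_1$ a pure degree-one harmonic. Then $\Exp_R \AS(Rf_1\res)=\snorm{f_1}^2$ exactly, while $\frac{2n}{1-\e^{-\alpha}}\SS_{\alpha/n^2}(f_1)=\snorm{f_1}^2\cdot\frac{n\bigl(1-\e^{-\alpha(n-1)/n^2}\bigr)}{1-\e^{-\alpha}}$, and the latter ratio tends to $(n-1)/n<1$ as $\alpha\to 0$, so the stated inequality fails for small $\alpha$. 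The repair is to bound $\ell(n+\ell-2)\ge\ell(n-1)$ for all $\ell\ge 1$, which replaces the diffusion time $\alpha/n^2$ with $\alpha/(n(n-1))$ (or equivalently worsens the leading constant by a factor $n/(n-1)$). Since you were reproducing the paper's own estimate, this is a flaw in the source rather than a misstep of yours, but as written your ``closes this case'' overclaims at $\ell=1$.
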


\begin{proof}
Recall the Fourier-theoretic expression for the average sensitivity, 
\begin{align}
\AS(f) 
&= - \frac{1}{2} \langle f, Lf \rangle \nonumber \\
&= \sum_{k \in \Z_2^n} \abs{\hatf(k)}^2 |k| \, . 
\label{eq:as}
\end{align}

Applying the same lemmas to~\eqref{eq:as} as we did to~\eqref{eq:ns} in the proof of the previous lemma, and noting that $|k| \le n$, gives
\[
\Exp_R \AS(Rf\res)
\le \sum_\ell \snorm{f_\ell}^2 \max(\ell,n) \, .
\]
Setting $t=\alpha/n^2$ in~\eqref{eq:ss-lower}, we have for all $0 \le \ell \le n$
\[
\frac{1-\e^{-\alpha \ell/n}}{1-\e^{-\alpha}} \ge \frac{\ell}{n} \, ,
\]
since $1-\e^{-x}$ is concave.  This completes the proof of~\eqref{eq:transfer-as}.
\end{proof}

\section{Application: the Gotsman-Linial conjecture on average}

In this section we bound the spherical sensitivity of polynomial threshold functions, and apply Theorem~\ref{thm:transfer} to bound their expected noise sensitivity.  

Our strategy for bounding the spherical sensitivity is similar to that of Kane~\cite{kane-surface-area}, who proved a similar bound on their Gaussian sensitivity.  He used the fact that adding Gaussian noise to a point $u$ can be thought of as choosing a random line through $u$, and then moving along that line by a distance $r$ chosen from the chi-squared distribution: then $p(x)$ can only change sign if it has a root on the intervening line segment.  Similarly, we add noise on the sphere by choosing a random great circle that passes through $u$, and then moving an angle $r$ along that circle where $r$ is chosen from a distribution derived from the heat equation on $S_{n-1}$.  The sensitivity is then at most the probability that the polynomial has a root on the resulting segment of the great circle.

First we need to bound the distribution of the angle, or equivalently the geodesic distance, that we travel on the sphere in time $t$.  Given two points $u, v \in S_{n-1}$ and a time $t \ge 0$, let $K_t(u,v)$ denote the \emph{heat kernel} on $S_{n-1}$ after time $t$.  That is, for any fixed $t$ and $u$, $K_t(u,v)$ is the probability distribution of the position $v$ of a particle that starts at $u$ and undergoes Brownian motion for time $t$.  As a linear operator, it is the solution to the partial differential equation
\[
\frac{\partial K_t}{\partial t} = \Delta_{S_{n-1}} K_t 
\]
with the initial condition $K_0 = \id$, so
\[
K_t(u,v) = \e^{t\Delta_{S_{n-1}}} \, . 
\]
By comparing the heat kernel on $S_{n-1}$ to that on $\R^{n-1}$, we prove the following:

\begin{lemma}
\label{lem:sphere-flat}
Fix $u \in S_{n-1}$, and suppose that $v \in S_{n-1}$ is chosen with probability distribution $K_t(u,v)$.  Let $r$ denote the angle between $u$ and $v$.  Then
\[
\Exp[r] \le \sqrt{2(n-1)t} \, . 
\]
\end{lemma}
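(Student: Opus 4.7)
The plan is to prove the stronger inequality $\Exp[r^2] \le 2(n-1)t$, from which the lemma follows immediately by Jensen's inequality $\Exp[r] \le \sqrt{\Exp[r^2]}$. This bound matches exactly the value of $\Exp[\norm{B_t}^2] = 2(n-1)t$ for a Brownian motion $B_t$ on $\R^{n-1}$ under the same heat-equation normalization (where $\Delta_{\R^{n-1}}\norm{x}^2 = 2(n-1)$). In this sense, the ``comparison of heat kernels'' is really a comparison of the Laplacian of the squared-distance function on the sphere with the corresponding quantity on $\R^{n-1}$.

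\textbf{Key pointwise inequality.} In geodesic polar coordinates centered at $u$, the Laplace-Beltrami operator applied to a function $f(r)$ of the geodesic radius alone is
\[
\Delta_{S_{n-1}} f(r) = f''(r) + (n-2)\cot(r)\,f'(r).
\]
Plugging in $f(r) = r^2$ gives $\Delta_{S_{n-1}} r^2 = 2 + 2(n-2)\,r\cot r$, while the Euclidean analog is the constant $\Delta_{\R^{n-1}}\norm{x}^2 = 2(n-1)$. The proof then reduces to the elementary inequality $r\cot r \le 1$ on $[0,\pi]$, which immediately yields $\Delta_{S_{n-1}} r^2 \le 2(n-1)$ pointwise. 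To verify the inequality, note $(r\cot r)' = (\sin 2r - 2r)/(2\sin^2 r) \le 0$ since $\sin y \le y$ for $y \ge 0$, so $r\cot r$ is nonincreasing on $(0,\pi)$ with limit $1$ at $r=0^+$.

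\textbf{Lifting to the expectation.} Writing $g(t) = \Exp[r_t^2] = \int_{S_{n-1}} r(v)^2\, K_t(u,v)\,dv$, I would differentiate in $t$, use $\partial_t K_t = \Delta_{S_{n-1}} K_t$, and integrate by parts (self-adjointness of $\Delta_{S_{n-1}}$) to get
\[
g'(t) = \int_{S_{n-1}} (\Delta_{S_{n-1}} r^2)(v)\, K_t(u,v)\,dv \le 2(n-1).
\]
Since $g(0) = 0$, integration gives $\Exp[r_t^2] \le 2(n-1)t$, and Jensen completes the argument.

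\textbf{Main obstacle.} The squared-distance function $r^2$ fails to be smooth at the antipode $-u$ (the cut locus), so the integration by parts above needs justification. This is the one subtlety in the write-up: I would handle it either by replacing $r^2$ with a smooth cutoff that agrees with it outside a small neighborhood of $-u$ and passing to the limit, or by writing everything in polar coordinates $(r,\omega)\in[0,\pi]\times S_{n-2}$, where the heat kernel is smooth and the pointwise bound $\Delta r^2 \le 2(n-1)$ can be applied directly, with the singular set $\{-u\}$ contributing nothing to the integral.
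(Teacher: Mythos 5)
Your proposal is correct and is essentially the paper's own argument (the elementary calculus proof given in Appendix~\ref{app:sphere-flat}): both reduce to the pointwise inequality $r\cot r\le 1$ on $[0,\pi]$, differentiate $\Exp[r^2]$ under the heat flow—your self-adjointness step is exactly the paper's integration by parts in polar coordinates, where the $\sin^{n-2}r$ factor kills the boundary/cut-locus contributions you worry about—and finish with $\Exp[r]\le\sqrt{\Exp[r^2]}\le\sqrt{2(n-1)t}$. The paper additionally notes a stronger stochastic-dominance statement via the Bessel/Jacobi comparison theorem, but that is not needed for the lemma as stated.
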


\begin{proof}
In Appendix~\ref{app:sphere-flat} we offer an elementary calculus proof that $\Exp[r^2] \le 2(n-1)t$, which implies the lemma.  However, we can prove something much stronger: namely, that $r$ is stochastically dominated by the corresponding process on the flat tangent space $\R^{n-1}$.  The heat equation on $\R^{n-1}$ is driven by the Laplacian
\begin{equation}
\label{eq:heat-flat}
\Delta_{\R^{n-1}} 
= \sum_{i=1}^{n-1} \frac{\partial^2}{\partial x_i^2} \, . 
\end{equation}
Place $u$ at the origin, and let $r$ denote the distance from the origin.  Since $K_t(u,v) = f(r)$ is spherically symmetric, transforming to polar coordinates gives
\[
\frac{\partial f}{\partial t} 
= \Delta_{\R^{n-1}} f 
= \frac{\partial^2 \!f}{\partial r^2} + (n-2) \frac{1}{r} \frac{\partial f}{\partial r} \, .
\]
Similarly, for $S_{n-1}$, place $u$ at the north pole and let $r$ denote the angle between $u$ and $v$.  Then $K_t(u,v) = f(r)$ is a zonally symmetric function, and applying the Laplace-Beltrami operator gives
\begin{equation}
\label{eq:heat-sphere}
\frac{\partial f}{\partial t}
= \Delta_{S_{n-1}} f 
= \frac{\partial^2 \!f}{\partial r^2} + (n-2) \frac{\cos r}{\sin r} \frac{\partial f}{\partial r} \, .
\end{equation}

We can view~\eqref{eq:heat-flat} and~\eqref{eq:heat-sphere} as governing the probability distributions of two stochastic processes on $r$.  These are well known in the theory of Brownian motion, and are referred to as Bessel and Jacobi processes respectively.  Since $\cos r / \sin r \le 1/r$, the comparison theorem of stochastic differential equations~\cite{ikeda-watanabe} implies that the distribution of $r$ on $S_{n-1}$ is stochastically dominated by its distribution on $\R^{n-1}$.  In particular, its second moment is at most the variance of $n-1$ independent variables $x_1,\ldots,x_{n-1} \in \R$ of variance $2t$, giving 
\[
\Exp[r^2] \le 2(n-1)t \, . 
\] 
Noting that $\Exp[r] \le \sqrt{\Exp[r^2]}$ completes the proof.
\end{proof}

We remark that for small $n$, we get a small improvement by computing $\Exp[r]$ exactly on $\R^{n-1}$.  The fact that $r^2/2$ follows a chi-squared distribution with $n-1$ degrees of freedom implies
\[
\Exp[r] 
\le \frac{2 \Gamma(n/2)}{\Gamma((n-1)/2)} \sqrt{t}
= \big( 1-O(1/n) \big) \sqrt{2(n-1)t} \, . 
\]

\begin{lemma}
\label{lem:great-circle}
Let $p \in \R[x_1, \ldots, x_n]$ be a polynomial of degree $d$, and let $G$ a great circle on $S_{n-1}$.  If $p$ is not identically zero on $G$, then $p$ has no more than $2d$ roots on $G$.  
\end{lemma}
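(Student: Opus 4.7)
The plan is to parametrize the great circle and reduce the problem to a classical bound on the number of zeros of a trigonometric polynomial. Let $u, v \in S_{n-1}$ be orthonormal vectors spanning the $2$-plane that cuts out $G$, so that every point of $G$ can be written as
\[
x(\theta) = \cos(\theta)\, u + \sin(\theta)\, v, \qquad \theta \in [0, 2\pi).
\]
Each coordinate $x_i(\theta) = u_i \cos\theta + v_i \sin\theta$ is a trigonometric polynomial of degree $1$ in $\theta$, and consequently any monomial in $x_1, \ldots, x_n$ of degree at most $d$ becomes, after substitution, a trigonometric polynomial of degree at most $d$ in $\theta$. Summing over the monomials of $p$ shows that $q(\theta) := p(x(\theta))$ is a trigonometric polynomial of degree at most $d$.

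Next I would invoke the standard fact that a nonzero trigonometric polynomial of degree $d$ has at most $2d$ zeros in $[0, 2\pi)$. The slickest way to see this is the substitution $z = \e^{i\theta}$, which turns $q(\theta)$ into $z^{-d} P(z)$, where $P$ is an ordinary polynomial in $z$ of degree at most $2d$. Roots of $q$ on $[0, 2\pi)$ correspond to roots of $P$ on the unit circle $|z| = 1$, and there are at most $2d$ of these (counted with multiplicity) unless $P \equiv 0$, in which case $q \equiv 0$ and $p$ vanishes identically on $G$, contrary to hypothesis.

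Combining the two steps, under the assumption that $p$ is not identically zero on $G$, the trigonometric polynomial $q(\theta) = p(x(\theta))$ is not identically zero and has degree at most $d$, so it admits at most $2d$ zeros in $[0, 2\pi)$. Since these zeros are in bijection with the roots of $p$ on $G$, the lemma follows. No step here is delicate: the only thing to watch is the bookkeeping that expanding each degree-$\le d$ monomial in the $x_i(\theta)$'s produces a trigonometric polynomial of degree $\le d$, which is immediate from the product-to-sum identities (or equivalently from the $z = \e^{i\theta}$ substitution applied term by term).
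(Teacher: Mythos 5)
Your proof is correct, but it takes a different route from the paper's. The paper argues via B\'ezout's theorem: after rotating so that $G$ is the unit circle in the $(x_1,x_2)$-plane, it restricts $p$ to that plane to get a bivariate polynomial $r$ of degree at most $d$, checks that $q(x_1,x_2)=x_1^2+x_2^2-1$ is irreducible, and concludes that $q$ and $r$ share at most $2d$ common zeros unless $q\mid r$, i.e.\ unless $p$ vanishes identically on $G$. You instead parametrize $G$ by $x(\theta)=\cos(\theta)u+\sin(\theta)v$, observe that $p(x(\theta))$ is a trigonometric polynomial of degree at most $d$, and invoke the classical bound (via the substitution $z=\e^{i\theta}$, reducing to an algebraic polynomial of degree at most $2d$ and the fundamental theorem of algebra) that a nonzero trigonometric polynomial of degree $d$ has at most $2d$ zeros in $[0,2\pi)$. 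Your argument is the more elementary of the two --- it needs no algebraic geometry, no irreducibility check, and no appeal to intersection theory --- and it makes the degenerate case transparent, since $P\equiv 0$ exactly corresponds to $p$ vanishing on $G$. The paper's B\'ezout route, on the other hand, would extend more naturally to counting intersections of the zero set of $p$ with other low-degree algebraic curves on the sphere, whereas your parametrization exploits the specific structure of a great circle; for the lemma as stated, both yield exactly the same bound of $2d$.
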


\begin{proof}
Since applying a linear transformation to $x_1, \ldots, x_n$ doesn't change $p$'s degree, without loss of generality we can assume that $G$ is the unit circle in the plane spanned by the $x_1$ and $x_2$ axes: that is, the variety, or set of roots, of the polynomial 
\[
q(x_1,x_2) = x_1^2 + x_2^2 - 1 \, . 
\]
Restricting to this plane, i.e., setting $x_i=0$ for all $i > 2$, yields a polynomial $r(x_1,x_2)$ of degree $d_r \le d$.  
B\'{e}zout's theorem~\cite{Shafarevich:1994,Schmid:1995} states that two polynomials $q, r$ of degree $d_q$ and $d_r$ can share at most $d_q d_r$ roots unless they share a common factor.  It is easy to check that $q$ is irreducible over $\R$ (and even over $\C$), since if it had a linear factor then $G$ would consist of the union of two lines.  Therefore, $q$ and $r$ share a common factor only if $q$ divides $r$, in which case $p$ is identically zero on $G$.  If they do not, they share at most $2d$ roots.
\end{proof}

Putting these lemmas together gives us a bound on the spherical sensitivity of a polynomial threshold function.
\begin{theorem}
\label{thm:poly-ss}
Let $p: \R^n \to \R$ be a polynomial of degree $d$ and let $f(x) = \sgn(p(x))$ where $\sgn z = +1$ for $z \ge 0$ and $\sgn z = -1$ for $z < 0$.  Then 
\[
\SS_t(f) \le \frac{d}{\pi} \sqrt{2nt}  \, . 
\]
\end{theorem}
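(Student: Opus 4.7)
The plan is to use the probabilistic interpretation of $\SS_t(f)$: since $f = \sgn(p)$ takes values in $\{\pm 1\}$, equation~\eqref{eq:ss} gives $\SS_t(f) = \Pr[f(u) \ne f(v)]$, where $u$ is uniform on $S_{n-1}$ and $v$ is the position after Brownian motion for time $t$. I would bound this probability by conditioning on a great circle through $u$ and $v$, counting sign changes via Lemma~\ref{lem:great-circle}, and controlling the arclength via Lemma~\ref{lem:sphere-flat}.

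First I would exploit rotational symmetry to sample $(u,v)$ in a more convenient form. Because the heat kernel $K_t(u,\cdot)$ is invariant under the stabilizer $\SO_{n-1}$ of $u$, I can sample $v$ by choosing a tangent direction at $u$ uniformly --- equivalently, a uniform great circle $G$ through $u$ together with a uniformly random sign of direction --- and then moving along $G$ by geodesic distance $r$, whose marginal distribution is precisely the one bounded in Lemma~\ref{lem:sphere-flat}. By $\SO_n$-invariance of the natural measure on the flag manifold $\{(u,G): u\in G\}$, this joint law is the same as first choosing a uniform great circle $G$ and then choosing $u$ uniformly on $G$.

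Next I would condition on $G$. By Lemma~\ref{lem:great-circle}, either $p$ vanishes identically on $G$ --- a measure-zero event on which $f\equiv +1$ under the convention $\sgn 0 = +1$, so there is no contribution --- or $p$ has at most $2d$ roots on $G$. Whenever $f(u) \ne f(v)$, the polynomial $p$ takes opposite signs at the endpoints of the geodesic arc from $u$ to $v$, so the intermediate value theorem forces one of the $\le 2d$ roots into this arc. With $u$ uniform on a circle of circumference $2\pi$, any fixed root is covered by a length-$r$ arc starting at $u$ (in the chosen direction) with probability exactly $r/(2\pi)$, so a union bound over the at most $2d$ roots yields
\[
\Pr[f(u)\ne f(v) \mid G,\, r] \le 2d \cdot \frac{r}{2\pi} = \frac{dr}{\pi} \, .
\]
Taking the expectation over $r$ and applying Lemma~\ref{lem:sphere-flat},
\[
\SS_t(f) \le \frac{d}{\pi}\,\Exp[r] \le \frac{d}{\pi}\sqrt{2(n-1)t} \le \frac{d}{\pi}\sqrt{2nt} \, ,
\]
which is the claimed bound.

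I expect the main conceptual hurdle to be the reparametrization in the first step: carefully justifying that we may decouple the radial and angular components of spherical Brownian motion, and that the joint law of $(u,G)$ can be described by choosing $G$ uniformly first and $u\in G$ afterwards. Once these symmetry facts are cleanly stated, Lemmas~\ref{lem:great-circle} and~\ref{lem:sphere-flat} combine with a one-dimensional union bound to finish the argument almost mechanically; the only subtlety after that is handling the measure-zero degenerate great circles on which $p\equiv 0$, which contribute nothing.
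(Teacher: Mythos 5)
Your proposal is correct and follows essentially the same route as the paper: decompose the Brownian step as a uniformly random great circle through $u$ plus a heat-kernel-distributed angle $r$, bound the sign-change probability by the expected number of the at most $2d$ roots (Lemma~\ref{lem:great-circle}) falling in the arc, which is $(2d/2\pi)\Exp[r]$ since $u$ is uniform on the circle, and finish with Lemma~\ref{lem:sphere-flat}. Your explicit treatment of the flag-manifold symmetry and the degenerate circles is just a more careful spelling-out of steps the paper states implicitly.
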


\begin{proof}
Recall that $\SS_t(f)$ is the probability that $f(u) \ne f(v)$ where $u$ is chosen uniformly and $v$ is chosen from $K_t(u,v)$.  Equivalently, we can choosen $u$ uniformly, and then arrive at $v$ by choosing a uniformly random great circle $G$ passing through $u$ (by choosing a tangent vector from the Haar measure on $S_{n-2}$), choosing $r$ according to the heat kernel, and moving an angle $r$ along $G$.

If $p$ is identically zero on $S_{n-1}$, then $f$ is identically zero as well, in which case $\SS_t(f) = 0$.  Otherwise, with probability $1$ we have $f(u) \ne 0$, in which case $p$ is not identically zero on $G$.  By Lemma~\ref{lem:great-circle}, there are at most $2d$ roots of $p$ on $G$.  The probability that $p(u)$ and $p(v)$ have different signs is then at most the expected number of roots of $p$ on $G$ between $u$ and $v$.  Since $u$'s position on $G$ is uniformly random, this is simply $(2d/2\pi) \Exp[r]$, which by Lemma~\ref{lem:sphere-flat} is at most $(d/\pi) \sqrt{2nt}$.  
\end{proof}

Theorems~\ref{thm:transfer} and~\ref{thm:poly-ss} immediately imply our bound on the expected noise sensitivity:

\begin{theorem}
\label{thm:poly-ns}
Let $p$ be a polynomial of degree $d$ and let $f(x) = \sgn(p(x))$.  Let $R \in \SO_n$ be uniform in the Haar measure.  Then 
\begin{equation}
\label{eq:poly-ns}
\Exp_R \NS_\eps(Rf\res) \le \big( 1+O(\eps) \big) \frac{2}{\pi} \,d \sqrt{\eps} \, .
\end{equation}
\end{theorem}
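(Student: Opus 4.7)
The plan is to simply chain the two previously-established results together, since the statement is announced as an immediate corollary. I take $f = \sgn(p)$, which is bounded (hence in $L_2(S_{n-1})$), and apply Theorem~\ref{thm:transfer} with the specified noise level $\eps$. This gives
\[
\Exp_R \NS_\eps(Rf\res) \le \SS_t(f), \qquad t = \frac{1}{n}\log \frac{1}{1-2\eps}.
\]

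Next I apply Theorem~\ref{thm:poly-ss} to the function $f = \sgn(p)$, which bounds the spherical sensitivity by a quantity depending on $d$ and $\sqrt{nt}$:
\[
\SS_t(f) \le \frac{d}{\pi}\sqrt{2nt}.
\]
Plugging in $nt = \log \frac{1}{1-2\eps}$, the product $\sqrt{2nt}$ becomes $\sqrt{2\log\frac{1}{1-2\eps}}$, and the $n$'s cancel cleanly.

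The only remaining bookkeeping is to extract the stated $\sqrt{\eps}(1+O(\eps))$ asymptotics. Using the Taylor expansion $\log\frac{1}{1-2\eps} = 2\eps + O(\eps^2)$, I get
\[
\sqrt{2\log\tfrac{1}{1-2\eps}} = \sqrt{4\eps + O(\eps^2)} = 2\sqrt{\eps}\,\sqrt{1 + O(\eps)} = 2\sqrt{\eps}\,\bigl(1 + O(\eps)\bigr),
\]
which yields the claimed bound $\Exp_R \NS_\eps(Rf\res) \le (1+O(\eps))\frac{2}{\pi}d\sqrt{\eps}$. There is no real obstacle here: the substantive work is contained in Theorems~\ref{thm:transfer} and~\ref{thm:poly-ss}, and what remains is just a one-line combination together with a scalar Taylor expansion. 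One minor point worth flagging is that Theorem~\ref{thm:transfer} is stated for arbitrary $f \in L_2(S_{n-1})$, so one must confirm that $\sgn(p)$, although not smooth, lies in $L_2(S_{n-1})$; this is immediate since it is bounded in magnitude by~$1$.
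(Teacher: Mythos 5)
Your proof is correct and is exactly the paper's intended argument: the paper gives no separate proof, stating only that Theorems~\ref{thm:transfer} and~\ref{thm:poly-ss} immediately imply the bound, and your chaining with $t = \frac{1}{n}\log\frac{1}{1-2\eps}$, so that $nt = 2\eps + O(\eps^2)$ and $\frac{d}{\pi}\sqrt{2nt} = \big(1+O(\eps)\big)\frac{2}{\pi}d\sqrt{\eps}$, is precisely that one-line combination.
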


Similar but simpler reasoning implies a bound on the expected average sensitivity:

\begin{theorem}
\label{thm:poly-as}
Let $p$ be a polynomial of degree $d$ and let $f(x) = \sgn(p(x))$.  Let $R \in \SO_n$ be uniform in the Haar measure.  Then 
\begin{equation}
\label{eq:poly-as}
\Exp_R \AS(Rf\res) 
\le \big( 1+O(1/n) \big) \frac{2}{\pi} \, d \sqrt{n} \, . 
\end{equation}
\end{theorem}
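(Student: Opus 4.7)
The plan is to mirror the geometric argument used in Theorem~\ref{thm:poly-ss} but apply it directly to the discrete average sensitivity, bypassing Theorem~\ref{thm:transfer-as} altogether. Routing through Theorem~\ref{thm:transfer-as} and Theorem~\ref{thm:poly-ss} produces a bound of the form $\frac{2d}{\pi}\cdot\frac{\sqrt{2\alpha n}}{1-\e^{-\alpha}}$, which after optimizing over $\alpha$ is still larger than the target leading constant $\frac{2d\sqrt{n}}{\pi}$ by a universal factor. So the ``simpler reasoning'' must exploit the fact that the pair of points giving rise to average sensitivity has a \emph{deterministic} angular separation, rather than one drawn from a heat kernel.

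The first step is to rewrite $\Exp_R \AS(Rf\res)$ as a probability about two random points on the sphere. Writing $\AS(g) = n \Pr_{x,y}[g(x)\ne g(y)]$ for $x$ uniform on $H$ and $y$ a uniform Hamming-neighbor, and swapping the expectations in $R$ and $(x,y)$, I would show
\[
\Exp_R \AS(Rf\res) = n \Pr_{u,v}[f(u)\ne f(v)],
\]
where $(u,v) = (R^{-1}x,R^{-1}y)$ for a fixed neighbor pair and a Haar-random $R$. Any two Hamming-neighbors in $H$ satisfy $x\cdot y = (n-2)/n$, so $(u,v)$ is a uniformly random pair on $S_{n-1}$ at fixed angle $\theta = \arccos(1 - 2/n)$; the joint distribution depends only on the inner product $x\cdot y$, by the two-point transitivity of the Haar action of $\SO_n$ on pairs with a given angle.

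The second step applies the great-circle argument exactly as in Theorem~\ref{thm:poly-ss}. Almost surely, $u,v$ lie on a unique great circle $G$, and $p$ is not identically zero on $G$. Then Lemma~\ref{lem:great-circle} bounds the number of roots of $p$ on $G$ by $2d$. Conditioned on $G$, the point $u$ is uniform on $G$, and each root lies on the shorter arc from $u$ to $v$ with probability $\theta/(2\pi)$; by a union bound,
\[
\Pr[f(u)\ne f(v)] \le \frac{2d\,\theta}{2\pi} = \frac{d\theta}{\pi}.
\]

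The final step is the elementary expansion $\theta = \arccos(1 - 2/n) = 2\arcsin(1/\sqrt{n}) = (2/\sqrt{n})\bigl(1 + O(1/n)\bigr)$, which gives
\[
\Exp_R \AS(Rf\res) \le n\cdot\frac{d\theta}{\pi} = \bigl(1+O(1/n)\bigr)\,\frac{2}{\pi}\,d\sqrt{n},
\]
as claimed. The only real subtlety is justifying the distributional claim in step one, namely that, independent of the choice of neighbor pair $(x,y)$, the image $(R^{-1}x,R^{-1}y)$ is distributed as a uniformly random pair at angle $\theta$; everything else is a routine specialization of the proof of Theorem~\ref{thm:poly-ss}, simpler because we no longer need to average the traversed angle against the heat-kernel distribution.
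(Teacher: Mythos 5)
Your proposal is correct and follows essentially the same route as the paper's own (very terse) proof: fix the angle $r=\cos^{-1}(1-2/n)=(2/\sqrt{n})(1+O(1/n))$ between adjacent corners, observe that $\Exp_R \AS(Rf\res)$ is $n$ times the probability that $p$ changes sign between a uniformly random pair of points on $S_{n-1}$ at angle $r$, and apply the great-circle root count of Lemma~\ref{lem:great-circle} as in Theorem~\ref{thm:poly-ss} to bound that probability by $(d/\pi)r$. You merely make explicit the two steps the paper leaves implicit, namely the two-point-transitivity argument identifying the distribution of the rotated neighbor pair and the union bound over the at most $2d$ roots.
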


\begin{proof}
The angle between two adjacent corners of the hypercube is 
\[
r = \cos^{-1} (1-2/n) = \frac{2}{\sqrt{n}} \,\big(1+O(1/n) \big) \, ,
\]
and $\Exp_R \AS(Rf\res)$ is $n$ times the probability that $p$ changes sign between a uniformly random pair of points $r$ apart.  Using Lemma~\ref{lem:great-circle} as in the proof of Theorem~\ref{thm:poly-ns}, this probability is at most $(d/\pi)r$.
\end{proof}

\begin{remark}
For small $\eps$, we can also prove Theorem~\ref{thm:poly-ns} by noting that the angle $r$ between two points $x, y$ on the hypercube where we have flipped each bit independently with probability $\eps$ obeys $\Exp[\cos r] = 1-2\eps$.
\end{remark}

\begin{remark}
The leading constant in~\eqref{eq:poly-as} is better than we would obtain from Theorem~\ref{thm:transfer-as}, even after optimizing the parameter $\alpha$.
\end{remark}

\begin{remark}
Theorems~\ref{thm:poly-ss} and~\ref{thm:poly-ns} immediately generalize from polynomials to any class of functions with a bound on the number of roots lying on a great circle.  If there are at most $b$ such roots, the expected noise sensitivity of the corresponding threshold function is $O(b \sqrt{\eps})$ and its expected average sensitivity is $O(b \sqrt{n})$.
\end{remark}

\paragraph{Acknowledgments}  We are grateful to Fabrice Baudoin, Laura De Carli, Costas Efthimiou, Veit Elser, Josh Grochow, Ilia Krasikov, Ryan O'Donnell, Thomas H.\ Parker, Dan Rockmore, and James Stokes for helpful conversations.  This work was supported by NSF grants CCF-1117426 and CCF-1219117.










\bibliography{sensitivity}

\appendix

\section{The expected energy in terms of the Gegenbauer and Kravchuk polynomials}
\label{app:kravchuk}

For $0 \le k,h \le n$, l et $\kappa_k(h)$ denote the Kravchuk polynomial
\[
\kappa_k(h)
= \sum_{j=0}^k (-1)^h {k \choose j} {n-k \choose h-j} \, . 
\]

\begin{lemma}
\label{eq:exp-harmonic}
Let $f: S_{n-1} \to \C$ be a spherical harmonic of degree $\ell$, and let $R \in \SO_n$ be uniform in the Haar measure.  Then 
\[
\Exp_R \babs{ \widehat{Rf}\res(k) }^2 
= \frac{\snorm{f}^2}{\sqrt{d_\ell}} \frac{1}{2^n} \sum_{h=0}^n \kappa_k(h) \,\gamma_\ell(1-2h/n) \, , 
\]
\end{lemma}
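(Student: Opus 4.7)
}

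The plan is to unfold the squared Fourier coefficient into a double sum over the hypercube, use the evaluation map to turn pointwise values of $Rf$ into spherical inner products against zonal harmonics, and then apply Schur's lemma to the resulting inner-product expression; a combinatorial repackaging then identifies the Kravchuk polynomial.

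First I would write
\[
\babs{\widehat{Rf}\res(k)}^2
= \langle \chi_k, Rf\crangle \,\langle Rf, \chi_k\crangle
= \Exp_{x,y\in H} \chi_k(x)\,\chi_k(y)\,(Rf)(x)^*\,(Rf)(y),
\]
so that after taking $\Exp_R$ the only $R$-dependent piece is $\Exp_R (Rf)(x)^*(Rf)(y)$. Since $f$ is a pure degree-$\ell$ harmonic, so is $Rf$, and the evaluation map~\eqref{eq:eval} reduces to a single term:
\[
(Rf)(x) = \sqrt{d_\ell}\,\bigl\langle \gamma_\ell^{(x)}, Rf\bigr\srangle .
\]
Substituting gives
\[
\Exp_R (Rf)(x)^*(Rf)(y) = d_\ell\,\Exp_R \bigl\langle Rf, \gamma_\ell^{(x)}\bigr\srangle\,\bigl\langle \gamma_\ell^{(y)}, Rf\bigr\srangle .
\]

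Next I would apply the Schur identity~\eqref{eq:schur-outer-inner} with $g = \gamma_\ell^{(y)}$ and $h = \gamma_\ell^{(x)}$ (both already pure degree $\ell$), which yields $\frac{\snorm{f}^2}{d_\ell}\bigl\langle \gamma_\ell^{(y)},\gamma_\ell^{(x)}\bigr\srangle$. Invoking the Gegenbauer reproducing formula~\eqref{eq:gegen-inner} then collapses this to
\[
\Exp_R (Rf)(x)^*(Rf)(y) \;=\; \frac{\snorm{f}^2}{\sqrt{d_\ell}}\,\gamma_\ell(x\cdot y).
\]
Since points $x,y\in H=\{\pm1/\sqrt n\}^n$ satisfy $x\cdot y = 1-2\,d(x,y)/n$, the expected squared coefficient becomes
\[
\Exp_R\babs{\widehat{Rf}\res(k)}^2
= \frac{\snorm{f}^2}{\sqrt{d_\ell}}\,\Exp_{x,y\in H}\bigl[\chi_k(x)\chi_k(y)\,\gamma_\ell(1-2d(x,y)/n)\bigr].
\]

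Finally I would carry out the combinatorial simplification. With $z_i := \sgn(x_i y_i)$, the variable $z\in\{\pm1\}^n$ is uniform when $y$ is uniform and $x$ is fixed; moreover $\chi_k(x)\chi_k(y)=\chi_k(z)$ and $d(x,y)=|\{i:z_i=-1\}|$. Grouping the $2^n$ terms by the number $h$ of negative coordinates,
\[
\Exp_{x,y\in H}\bigl[\chi_k(x)\chi_k(y)\,\gamma_\ell(1-2d(x,y)/n)\bigr]
= \frac{1}{2^n}\sum_{h=0}^n \gamma_\ell\!\left(1-\tfrac{2h}{n}\right)\!\!\sum_{z:|z|_-=h}\!\chi_k(z),
\]
and the inner character sum is precisely the Kravchuk polynomial $\kappa_k(h)$ (depending on $k$ only through $|k|$). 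Combining yields the claimed identity.

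The main obstacle is keeping the various normalizations straight — the factor $\sqrt{d_\ell}$ in the evaluation map, the $1/d_\ell$ from Schur, and the $1/\sqrt{d_\ell}$ from~\eqref{eq:gegen-inner} must multiply correctly to leave a single $1/\sqrt{d_\ell}$; and one must confirm that the given definition of $\kappa_k(h)$ agrees with the character-sum identity $\sum_{z:|z|_-=h}\chi_k(z)=\kappa_{|k|}(h)$.
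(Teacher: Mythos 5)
Your proposal is correct and follows essentially the same route as the paper's proof: evaluation maps to convert pointwise values of $Rf$ into inner products with zonal harmonics, Schur's lemma~\eqref{eq:schur-outer-inner} plus the reproducing identity~\eqref{eq:gegen-inner} to obtain $\frac{\snorm{f}^2}{\sqrt{d_\ell}}\gamma_\ell(x\cdot y)$, and then the Kravchuk character-sum over Hamming weights; the only cosmetic difference is that the paper first changes variables to $z=x\oplus y$ with the pole at a cube corner before invoking Schur, whereas you apply Schur to a general pair $(x,y)$ and collapse the double expectation afterwards. Your normalizations check out, and your character-sum identity uses the correct sign $(-1)^j$ (the paper's displayed definition of $\kappa_k(h)$ has a typo writing $(-1)^h$, but its proof uses $(-1)^j$ as you do).
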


\begin{proof}
Given $x, y \in \Z_2^n$, we define $x \oplus y$ as their sum in $\Z_2^n$.  In terms of the embedding $\{ \pm 1/\sqrt{n} \}$ of $\Z_2^n$ in the unit sphere, $(x \oplus y)_i = \sqrt{n} x_i y_i$.  In that case we have $\chi_k(x \oplus y) = \chi_k(x) \chi_k(y)$.  Then
\begin{align*}
\Exp_R \babs{ \widehat{Rf}\res(k) }^2 
&= \Exp_R \babs{\langle Rf, \chi_k \crangle}^2 \\
 &= \Exp_R \babs{\Exp_{x \in \Z_2^n} \chi_k(x) Rf(x)}^2 \\
 &= \Exp_R \Exp_{x,y \in \Z_2^n} \chi(x)\chi_k(y) Rf(x) Rf(y)^* \\
&= \Exp_{x,y \in \Z_2^n} \chi_k(x \oplus y) \Exp_R Rf(x) Rf(y)^* \, .
\end{align*}
Let us place the north pole $\eta$ at the corner $0^n$ of the hypercube, i.e., at $(1,\ldots,1)/\sqrt{n}$.  Since the angle between $x$ and $y$ is the same as between $\eta$ and $x \oplus y$, we can change variables to $z = x \oplus y$ and write this expectation as
\begin{equation}
\label{eq:random-coefficient}
\Exp_R \babs{ \widehat{Rf}\res(k) }^2 
= \Exp_{z \in \Z_2^n} \chi_k(z) \Exp_R Rf(z) Rf(\eta)^* \, .
\end{equation}

As in Section~\ref{sec:schur}, let $\eval^{(z)}_\ell$ denote the linear operator that evaluates a spherical harmonic of degree $\ell$ at the point $z$.  Writing~\eqref{eq:random-coefficient} in terms of $\eval^{(z)}_\ell$ and using~\eqref{eq:eval}, we have
\begin{align*}
\Exp_R \babs{ \widehat{Rf}\res(k) }^2 
&= \Exp_{z \in \Z_2^n} \chi_k(z) \Exp_R \eval^{(z)}_\ell(Rf) \,\eval^{(\eta)}_\ell(Rf)^* \\
&= d_\ell \Exp_{z \in \Z_2^n} \chi_k(z) 
  \Exp_R \left[ \left\langle \gamma_\ell^{(z)}, Rf \right\srangle \left\langle Rf, \gamma_\ell^{(\eta)} \right\srangle \right] \, ,
\end{align*}
Now we apply Schur's lemma~\eqref{eq:schur-outer-inner} and the formula~\eqref{eq:gegen-inner} for the inner product of two rotated Gegenbauer polynomials to obtain
\begin{align}
\Exp_R \babs{ \widehat{Rf}\res(k) }^2 
&= \snorm{f}^2 \Exp_{z \in \Z_2^n} \chi_k(z) \left\langle \gamma_\ell^{(z)}, \gamma_\ell^{(\eta)} \right\srangle \nonumber \\
&= \frac{\snorm{f}^2}{\sqrt{d_\ell}} \Exp_{z \in \Z_2^n} \chi_k(z) \,\gamma_\ell^{(\eta)}(z) \, . 
\label{eq:random-coefficient-2}
\end{align}
Finally, we write the expectation over $z$ as a sum over $z$'s Hamming weight $h$.  For each $0 \le j \le |k|$, there are ${k \choose j} {n-k \choose h-j}$ points $z \in \Z_2^n$ with Hamming weight $h$ and with $j$ ones in the support of $k$, in which case $\chi_k(k) = (-1)^j$.  Thus 
\[
\sum_{\substack{z \in \Z_2^n \\ |z|=h}} \chi_k(z) 
= \sum_{j=0}^k (-1)^j {k \choose j} {n-k \choose h-j} 
= \kappa_k(h) \, . 
\]
Finally, we have $\eta \cdot z = 1 - 2h/n$.  Thus
\[
\Exp_R \babs{ \widehat{Rf}\res(k) }^2 
= \frac{\snorm{f}^2}{\sqrt{d_\ell}} \frac{1}{2^n} \sum_{h=0}^n \kappa_k(h) \,\gamma_\ell(1-2h/n) \, ,
\]
completing the proof.
\end{proof}

\section{A calculus proof of Lemma~\ref{lem:sphere-flat}}
\label{app:sphere-flat}

\begin{proof}
As in the main text, applying the Laplacian on $\R^{n-1}$ to a spherically-symmetric function $K_t(u,v)=f(r)$ gives
\[
\Delta_{\R^{n-1}} f = \frac{\partial^2 \!f}{\partial r^2} + (n-2) \frac{1}{r} \frac{\partial f}{\partial r} \, .
\]
If $f(u,v)=f(r)$ is a probability distribution on $v$, the second moment of $r$ is an integral over spherical shells of radius $r$.  A shell of thickness $\dr$ has volume $\Omega_{n-2} r^{n-2} \,\dr$, giving 
\begin{gather*}
1 = \Omega_{n-2} \int_0^\infty f(r) \,r^{n-2} \,\dr \\
\Exp[r^2] = \Omega_{n-2} \int_0^\infty f(r) \,r^n \,\dr \, .
\end{gather*}
If $f$ obeys the differential equation
\[
\frac{\partial f}{\partial t} = \Delta_{R^{n-1}} f \, ,
\]
then integrating by parts gives
\begin{align*}
\frac{\partial}{\partial t} \Exp[r^2] 
&= \Omega_{n-2} \int_0^\infty \Delta_{\R^{n-1}} f(r) \,r^n \,\dr \\
&= \Omega_{n-2} \left[ 
\int_0^\infty \frac{\partial^2 \!f}{\partial r^2} \,r^n \,\dr 
+ (n-2) \int_0^\infty \frac{\partial f}{\partial r} \,r^{n-1} \,\dr 
\right] \\
&= \Omega_{n-2} \left[ 
n(n-1) 
\int_0^\infty f(r) \,r^{n-2} \,\dr 
- (n-2)(n-1) \int_0^\infty f(r) \,r^{n-2} \,\dr 
\right] \\
&= 2(n-1) \,\Omega_{n-2} \int_0^\infty f(r) \,r^{n-2} \,\dr \\
&= 2(n-1) \, .
\end{align*}
Integrating over $t$ recovers the fact that, after undergoing Brownian motion for time $t$, 
\[
\Exp[r^2] = \sum_{i=1}^{n-1} \Exp[x_i^2] = 2(n-1)t \, . 
\]

Now we carry out the same calculation for $S_{n-1}$.  Place $u$ at the north pole, and let $r$ denote the geodesic distance (equivalently, the angle) between $u$ and $v$.  Then $K_t(u,v) = f(r)$ is a zonal function, and applying the Laplace-Beltrami operator gives
\[
\Delta_{S_{n-1}} f = \frac{\partial^2 \!f}{\partial r^2} + (n-2) \frac{\cos r}{\sin r} \frac{\partial f}{\partial r} \, . 
\]
We again integrate over annular shells at distance $r$.  A shell of angular thickness $\dr$ has volume $\Omega_{n-2} \sin^{n-2} r \,\dr$, so 
\begin{gather*}
1 = \Omega_{n-2} \int_0^\pi f(r) \,\sin^{n-2} r \,\dr \\
\Exp[r^2] = \Omega_{n-2} \int_0^\pi f(r) \,r^2 \sin^{n-2} r \,\dr \, .
\end{gather*}
Applying 
\[
\frac{\partial f}{\partial t} = \Delta_{S_{n-1}} f \, ,
\]
a lengthier integration by parts (keeping in mind that the boundary terms are zero) yields
\begin{align*}
\frac{\partial}{\partial t} \Exp[r^2] 
&= \Omega_{n-2} \int_0^\pi \Delta_{\R^{n-1}} f(r) \,r^2 \sin^{n-2} r \,\dr \\
&= \Omega_{n-2} \left[ 
\int_0^\pi \frac{\partial^2 \!f}{\partial r^2} \,r^2 \sin^{n-2} r \,\dr 
+ (n-2) \int_0^\pi \frac{\partial f}{\partial r} \,r^2 \sin^{n-3} r \cos r \,\dr 
\right] \\
&= 2 \,\Omega_{n-2} \int_0^\pi f(r) (\sin^{n-3} r) (\sin r + (n-2) r \cos r) \,\dr \\
&\le 2 (n-1) \,\Omega_{n-2} \int_0^\pi f(r) \,\sin^{n-2} r \,\dr \\
&= 2(n-1) \, . 
\end{align*}
where in the second-to-last line we used $r \cos r \le \sin r$ for $r \in [0,\pi]$.  Integrating over $t$ then gives 
\[
\Exp[r^2] \le 2(n-1)t \, ,
\]
and noting that $\Exp[r] \le \sqrt{\Exp[r^2]}$ completes the proof.
\end{proof}

\end{document}